\theoremstyle{definition}
\newtheorem{defn}{Definition}
\theoremstyle{plain}
\newtheorem{lemma}[defn]{Lemma}
\theoremstyle{remark}
\theoremstyle{definition}
\newcommand{\Neumark}{Na\u{\i}\-mark}
\newcommand{\tuple}[1]{\mathopen{\langle}#1\mathclose{\rangle}} 
\newcommand{\defeq}{\colonequals} 
\newcommand{\setdef}  [2]{\mathopen{}\left\{#1 \mid #2\right\}\mathclose{}} % { f(x) | p(x) }
\newcommand{\enset}   [1]{\mathopen{}\left\{#1\right\}\mathclose{}}         % {a,b,...z}
\newcommand{\fdec}    [3]{#1\colon #2 \longrightarrow #3}
\newcommand{\fdecdef} [5]{#1\colon #2 \longrightarrow #3 \coloncolon #4 \longmapsto #5}
\newcommand{\ZZ}  {\mathbb{Z}}
\newcommand{\CC}  {\mathbb{C}}
\newcommand{\Forall}[1]{\forall {#1}\boldsymbol{.}\;}
\def\bimplies{\Leftrightarrow}
\def\implies{\Rightarrow}
\newcommand{\Mcomma}{\text{,}}
\newcommand{\Mdot}{\text{.}}
\newcommand{\Mand}{\quad\quad\text{ and }\quad\quad}
\DeclareMathOperator{\Tr}{Tr} %Trace
\newcommand{\M}{\mathcal{M}}
\newcommand*\Distr{\mathcal{D}}
\newcommand{\BellState}{\ket{\Phi^+}}
\newcommand{\GHZt}{\mathsf{GHZ}}
\newcommand{\GHZ}{\ket{\GHZt}}
\newcommand{\GHZnt}{\mathsf{GHZ}(n)}
\newcommand{\GHZn}{\ket{\GHZnt}}
\newcommand{\Wt}{\mathsf{W}}
\newcommand{\W}{\ket{\Wt}}
\newcommand{\WSLOCCt}{\psi_{_{\mathsf{W}}}}
\newcommand{\GHZSLOCCt}{\psi_{_\mathsf{GHZ}}}
\newcommand{\WSLOCC}{\ket{\WSLOCCt}}
\newcommand{\GHZSLOCC}{\ket{\GHZSLOCCt}}
\newcommand{\vlambda}{\boldsymbol{\lambda}}
\newcommand{\Balt}{\mathsf{B}_{\vlambda,\Phi}}
\newcommand{\Bal}{\ket{\Balt}}
\newcommand{\Balzerot}{\mathsf{B}_{\vlambda,0}}
\newcommand{\Balzero}{\ket{\Balzerot}}
\newcommand{\Balargt}[2]{\mathsf{B}_{#1,#2}}
\newcommand{\Balarg}[2]{\ket{\Balargt{#1}{#2}}}
\newcommand{\ket}[1]{|#1\rangle}
\newcommand{\braket}[2]{\langle#1|#2\rangle}
\newcommand{\kp}{\ket{\psi}}
\newcommand{\vect}[1]{\mathbf{#1}}
\newcommand{\vm}{\vect{m}}
\newcommand{\vo}{\vect{o}}
\newcommand{\vtheta}{\boldsymbol{\theta}}
\newcommand{\vphi}{\boldsymbol{\varphi}}
\newcommand{\bl}[2]{\ket{#1,#2}}
\newcommand{\blTP}{\bl{\theta}{\varphi}}
\newcommand{\blTPone}{\bl{\theta_1}{\varphi_1}}
\newcommand{\blTPn}  {\bl{\theta_n}{\varphi_n}}
\newcommand{\blTPv}{\bl{\vtheta}{\vphi}}
\newcommand{\blout}[3]{\ket{#1,#2 \mapsto #3}}
\newcommand{\blTPO}{\blout{\theta}{\varphi}{o}}
\newcommand{\blTPOO}[1]{\blout{\theta}{\varphi}{#1}}
\newcommand{\blTPOone}{\blout{\theta_1}{\varphi_1}{o_1}}
\newcommand{\blTPOn}  {\blout{\theta_n}{\varphi_n}{o_n}}
\newcommand{\blTPOv}{\blout{\vtheta}{\vphi}{\vo}}
\newcommand{\blTPOOv}[1]{\blout{\vtheta}{\vphi}{#1}}
\newcommand{\braketTPv}[1]{\braket{\vtheta,\vphi}{#1}}
\newcommand{\braketTPOv}[1]{\braket{\vtheta,\vphi\mapsto\vo}{#1}}
\newcommand{\braketTPOi}[1]{\braket{\theta_i,\varphi_i \mapsto o_i}{#1}}
\newcommand{\braketTPOOv}[2]{\braket{\vtheta,\vphi\mapsto#1}{#2}}
\newcommand{\TP}{(\theta,\varphi)}
\newcommand{\TPv}{(\vtheta,\vphi)}
\newcommand{\TPone}{(\theta_1,\varphi_1)}
\newcommand{\TPn}  {(\theta_n,\varphi_n)}
\newcommand{\Prob}{\mathsf{Prob}}
\newcommand{\LM}{\mathsf{LM}}
\title{Minimum quantum resources for strong non-locality}
\author{Samson Abramsky \quad Rui Soares Barbosa \quad Giovanni Car\`{u}\
\institute{Department of Computer Science\\
University of Oxford}
\email{\{samson.abramsky, rui.soares.barbosa, giovanni.caru\}@cs.ox.ac.uk}\\
\and
Nadish de Silva
\institute{Department of Computer Science\\
University College London}
\email{nadish.desilva@utoronto.ca}
\and
Kohei Kishida,
\institute{Department of Computer Science\\
University of Oxford}
\email{kohei.kishida@cs.ox.ac.uk}
\and
Shane Mansfield
\institute{School of Informatics\\
University of Edinburgh}
\email{smansfie@staffmail.ed.ac.uk}
}
\begin{document}

\maketitle

\begin{abstract}
We analyse the minimum quantum resources needed to realise strong non-locality,
as exemplified e.g.~by the classical GHZ construction.
It was already known that no two-qubit system,
with any finite number of local measurements, can realise strong non-locality.
For three-qubit systems, we show that strong non-locality can only be realised
in the GHZ SLOCC class, and with equatorial measurements.
However, we show that in this class there is an infinite family of states which are pairwise non LU-equivalent 
that realise strong non-locality with finitely many measurements. These states 
have decreasing entanglement between one qubit and the other two,
necessitating an increasing number of local measurements on the latter.
\end{abstract}

\section{Introduction}\label{sec:intro}

In this paper, we aim at identifying the minimum quantum resources needed to witness strong contextuality \cite{AbramskyBrandenburger}, and more specifically, strong (or maximal) non-locality.
Non-locality is, of course, a fundamental phenomenon in quantum mechanics --
both from a foundational point of view, and with respect to quantum information and computation, in which it plays a central r\^ole.

The original form of Bell's argument \cite{Bell-thm},
as well as its now more standard formulation due to Clauser, Horne, Shimony, and Holt (CHSH) \cite{CHSH},
rests on deriving an inequality
that must be satisfied by probabilities arising from any local realistic theory,
but which is violated by those predicted by quantum mechanics
for a particular choice of a state and a finite set of measurements.
Greenberger, Horne, Shimony, and Zeilinger (GHSZ) \cite{GHZ,GHSZ90}
gave a stronger, inequality-free argument for quantum non-locality.
This depended only on the \emph{possibilistic} aspects of quantum predictions,
i.e.~on which joint outcomes given a choice of measurements have non-zero probability,
regardless of the actual value of the probabilities.
Their argument was later simplified by Mermin \cite{Mermin90:QuantumMysteriesRevisited-SimplifiedGHZ1,Mermin90:SimpleUnifiedForm}.
Whereas the Bell--CHSH argument used local measurements on a two-qubit system prepared in a maximally entangled state,
the GHZ--Mermin argument required a three-qubit system in the GHZ state.
Subsequently, Hardy  showed that
one can indeed find a proof of non-locality ``without inequalities'', i.e. based on possibilistic information alone,
using a bipartite, two-qubit system \cite{Hardy92:nonlocality1}.
Hardy's argument works on any two-qubit entangled state
bar the maximally entangled ones \cite{Hardy93:nonlocality2}.
In fact, a similar argument works on almost all $n$-qubit states \cite{AbramskyConstantinYing2015:HardyIsAlmostEverywhere}, the exceptions being those states which are products of one-qubit states and two-qubit maximally entangled states,
which provably do \emph{not} admit any non-locality argument ``without inequalities'' \cite{Mansfield17:Hardy}.
However, 
there is an important logical distinction between the GHSZ and Hardy possibilistic arguments.

Abramsky and Brandenburger \cite{AbramskyBrandenburger} introduced
a general mathematical framework for contextuality,
in which non-locality arises as a particular case.
This approach studies these phenomena at a level of generality
that abstracts away from the particularities of quantum theory.
The point is that contextuality and non-locality are witnessed by the empirical data itself, without presupposing any physical theory.
For this reason, one deals with ``empirical models'' --
tables of data for a given experimental scenario, obtained from empirical observations or predicted by some physical theory,
specifying probabilities of joint outcomes for the allowed sets of compatible measurements.

Various kinds of contextuality (or, in particular, non-locality) arguments were studied and classified at this abstract level,
leading to the introduction of a qualitative hierarchy of strengths of contextuality in \cite{AbramskyBrandenburger},
with further refinements in \cite{AbramskyMansfieldBarbosa:Cohomology-QPL,AbramskyEtAl:ContextualityCohomologyAndParadox}.
The classic arguments for quantum non-locality, familiar from the literature,
sit at different levels in this hierarchy.
There is a strict relationship of strengths of non-locality, rendered as
\[\text{Bell} < \text{Hardy} < \text{GHZ} \Mcomma\]
where these representative examples correspond, respectively, to
probabilistic non-locality, possibilistic  non-locality, and strong  non-locality.

Strong contextuality (or, in particular, non-locality)  arises when there is \emph{no assignment of outcomes to all the measurements} consistent with the events that the empirical model deems possible, i.e.~to which it attributes non-zero probability. It is exactly this impossibility which is shown by Mermin's classic argument in \cite{Mermin90:QuantumMysteriesRevisited-SimplifiedGHZ1}.
Strong contextuality is also the highest level of contextuality in a different, quantitative sense.
It turns out to coincide with the notion of maximal  contextuality,
the property that an empirical model admits no proper decomposition into a convex combination of a non-contextual model and another model.
This corresponds to attaining the maximum value of $1$ for the \emph{contextual faction}, a natural measure of contextuality introduced in 
\cite{AbramskyBrandenburger} as a generalisation of the notion of non-local fraction \cite{ElitzurPopescuRohrlich1992:QuantumNonlocalityForEachPairInAnEnsemble,BarrettKentPironio2006:MaximallyNonlocalAndMonogamousQuantumCorrelations,AolitaEtAl2012:FullyNonlocalQuantumCorrelations}. The contextual fraction is shown in \cite{AbramskyBarbosaMansfield17:CF} to be equal to the \emph{maximal normalised violation} of a contextuality-witnessing inequality. Hence, a model is strongly contextual if and only if it violates a generalised Bell inequality up to its algebraic bound.

Strong non-locality is particularly relevant to quantum computing.
It is exhibited, for example, by all graph states under stabiliser measurements \cite{GuhneEtAl:BellInequalitiesForGraphStates},
which provide resource states and measurements for universal quantum computing via the one-way or measurement-based model \cite{RaussendorfBriegel01:OneWay}.
It is also known to be necessary for increasing computational power in certain models of measurement-based quantum computing with restricted classical co-processing \cite{RaussendorfSC}.
For instance, in \cite{AndersBrowne} it was shown that GHZ strong non-locality enables a linear classical co-processor to implement the non-linear $\mathsf{AND}$ function, and subsequently in \cite{DunjkoKapourniotisKashefi:QuantumEnhancedSecureDelegatedClassicallComputing} that it enables the function to be implemented in a secure delegated way.
Moreover, strong non-locality has important consequences for certain information processing tasks: in particular, it is known to be required for perfect strategies \cite{MancinskaRobersonVarvitsiotis16:DecidingExistencePerfectStrategies} in certain cooperative games \cite{AbramskyBarbosaMansfield17:CF}.

\subsection*{Summary of results}

In this paper, our aim is to analyse the minimum quantum resources needed to realise strong non-locality. More precisely, we consider $n$-qubit systems viewed as $n$-partite systems,\footnote{We know by a result of Heywood and Redhead \cite{HeywoodRedhead83:Nonlocality} that strong contextuality can be realised  using a bipartite system, but with a qutrit at each site. Hence our focus on qubits.}
where each party can perform one-qubit local projective measurements.\footnote{Throughout
this paper, we focus on projective measurements.
The more general POVMs are justified as physical processes by \Neumark's dilation,
since they are described as projective measurements in a larger physical system.
Given that we are interested in characterising the minimum resources needed in order to witness strong non-locality, it seems reasonable to focus on PVMs, which do not need to be seen as measurements on a part of a larger system.
}
We shall consider the case where each party has a finite set of measurements available --
this is what corresponds to the standard experimental scenarios for non-locality.

\begin{itemize}
\item The first result we present is limitative in character.
It shows that strong non-locality \emph{cannot} be realised by a two-qubit system with any finite number of local measurements.
This result was already proven, using different terminology,
in \cite{BrassardMethotTapp2005:MinimumEntangledStatePseudoTelepathy}.
However, we include it for completeness and because its proof
is useful as a warm-up for proving the other results in this paper.\footnote{Note that, in the same paper, it is also shown that the result applies to any bipartite state where one of the systems is a qubit, by an application of Schmidt decomposition of any bipartite state. This means that the optimal dimention in which strong non-locality can be realised is $2 \times 2 \times 2 = 8$, i.e. a three-qubit system, since a two-qutrit system has dimension $9$.}

There is a subtle counterpoint to this in a result from \cite{BarrettKentPironio2006:MaximallyNonlocalAndMonogamousQuantumCorrelations}, which shows that using a maximally entangled bipartite state,
and an infinite family of local measurements, strong non-locality is achieved ``in the limit'' in a suitable sense. More precisely, as more and more measurements from the family are used, the local fraction -- the part of the behaviour which can be accounted for by a local model -- tends to $0$, or equivalently the non-local fraction tends to $1$.
There is an interesting connection to this in our results for the tripartite case.

However, there is a practical advantage in being able to witness strong non-locality with a fixed finite number of measurements.
If one wishes to design an experimental test for maximal non-locality,
it is desirable that one can increase precision, i.e. increase the lower bound on the non-local fraction,
without needing to expand the experimental setup -- in particular, the number of measurement settings required to be performed --
but rather by simply performing more runs of the same experiment.

\item Having shown that strong non-locality cannot be realised in the two-qubit case, we turn to the analysis of three-qubit systems. Of course, we know by the classical GHSZ--Mermin construction that strong non-locality \emph{can} be achieved in this case, using the GHZ state and Pauli  $X$ and $Y$ measurements on each of the qubits. Our aim is to analyse for which states, and with respect to which measurements, can strong non-locality be achieved.
We use the classification into SLOCC classes for tripartite qubit systems from \cite{DurVidalCirac00:3-SLOCC}.
According to this analysis, there are two maximal SLOCC classes, the GHZ and W classes. Below these, there are the degenerate cases of products of an entangled bipartite state with a one-qubit state, e.g.~$AB{-}C$. By the previous result, these degenerate cases cannot realise strong non-locality. We furthermore show that no state in the W class can realise strong non-locality, for any choice of finitely-many local measurements.

\item This leaves us with the GHZ SLOCC class. We use the detailed description of this class as a parameterised family of states from \cite{DurVidalCirac00:3-SLOCC}. 
We first show that any state in this class witnessing strong non-locality 
with finitely many local measurements must satisfy a number of constraints on the parameters.
In particular, the state must be balanced
in the sense that the coefficients in its unique linear decomposition into a pair of product states have the same complex modulus.
We furthermore show that \emph{only equatorial measurements need be considered} (the equators being uniquely determined by the state) -- no other measurements can contribute to a strong non-locality argument.

\item Having thus narrowed the possibilities for realising strong non-locality considerably, we find a new infinite family of models displaying strong non-locality using states within the GHZ SLOCC class that are not LU-equivalent to the GHZ state.
The states in this family start from GHZ and tend in the limit to the state $\BellState \otimes \ket{+}$ in the AB--C class with maximal entanglement on the first two qubits,  and in product with the third.
This family is actually closely related to the construction from
\cite{BarrettKentPironio2006:MaximallyNonlocalAndMonogamousQuantumCorrelations}
in which an increasing number of measurements on a bipartite maximally entangled state
eventually squeezes the local fraction to zero in the limit.
Our family is obtained by adding a third qubit to this setup, with two available local measurements,
and some entanglement between the first two qubits and the third one,
thus allowing  strong non-locality to be witnessed with a finite number of measurements.
There is a trade-off between the number of measurement settings available on the first two qubits -- and, consequently, the lower bound for the non-local fraction these measurements can witness -- and the amount of entanglement necessary between the third qubit and the original two.
\end{itemize}

\subparagraph*{Outline.}
The remainder of this article is organised as follows:
Section~\ref{sec:background} summarises some background material on non-locality and entanglement classification of three-qubit states,
Section~\ref{sec:bipartite} shows that strong non-locality cannot be witnessed by two-qubit states and a finite number of local measurements;
Section~\ref{sec:W} does the same for three-qubit states in the SLOCC class of W;
Section~\ref{sec:GHZ} deals with states in the SLOCC class of GHZ, deriving conditions on these necessary for strong non-locality;
Section~\ref{sec:GHZ-family} presents the family of strong non-locality arguments using states in the GHZ-SLOCC class;
and Section~\ref{sec:outlook} concludes with some discussion of open problems and further directions.
Detailed proofs of all the results are found in the Appendix.

\section{Background}\label{sec:background}

\subsection{Measurement scenarios and empirical models}

We summarise some of the main ideas of \cite{AbramskyBrandenburger}, with particular emphasis on non-locality. This is merely an instance of contextuality in a particular kind of measurement scenarios known as multipartite Bell-type scenarios.
For each notion, we introduce the general definition followed by its specialisation to multipartite Bell-type scenarios.

Measurement scenarios are abstract descriptions of experimental setups.
In general, a \emph{measurement scenario} is described by
a set
of measurement labels $X$, a set of outcomes $O$, and a cover $\M$ of $X$ consisting of measurement contexts, i.e. maximal sets of measurements that can be jointly performed. We are typically interested in measurement scenarios with finite $X$, but for technical reasons it will be useful to consider scenarios with infinitely many measurements in order to prove results about all their finite `subscenarios' at once.
Throughout this paper, we shall also restrict our attention to dichotomic measurements, with outcome set $O = \enset{-1,+1}$.
This is a reasonable restriction, especially since our main focus shall be projective measurements on single qubits.
Multipartite Bell-type scenarios are a particular kind of measurement scenario which can be thought to describe multiple parties at different sites, each independently choosing to perform one of a number of measurements available to them.
More formally,
an $n$-partite Bell-type scenario is described by
sets $X_1, \ldots, X_n$ labelling the measurements available at each site (so that $X \defeq X_1 \sqcup \cdots \sqcup X_n$),
with maximal contexts corresponding to a single choice of measurement for each party,
or in other words a tuple 
$\vm = \tuple{m_1, \ldots, m_n} \in X_1 \times \cdots \times X_n$
(so $\M \cong \prod_{i=1}^n X_i$).

An empirical model is a collection of probabilistic data
representing possible results of running the experiment represented by a measurement scenario.
Given a measurement scenario $\tuple{X,\M,O}$, an \emph{empirical model} on that scenario is a family 
$\enset{e_C}_{C \in \M}$ where each $e_C \in \Distr(O^C)$
is a distribution over the set of joint outcomes to the measurements of $C$.
Given an assignment $\fdec{s}{C}{O}$ of outcomes to each measurement in $C$, 
the value $e_C(s)$ is the probability of obtaining the outcomes determined by $s$ when jointly performing the measurements in the context $C$. 

In the particular case of a Bell-type scenario, we have a family
$\enset{e_\vect{m} \in \Distr(O^n)}_{\vect{m} \in \prod_i X_i}$
of probability distributions.
Given a vector of outcomes $\vo = \tuple{o_1,\ldots,o_n} \in O^n$,
the probability $e_\vm(\vo)$ of obtaining the joint outcomes $\vo$ upon performing the measurements $\vm$ at each site
is often denoted in the literature on non-locality as follows:
\[e_\vm(\vo) =  \Prob(\vo | \vm) = \Prob(o_1, \ldots, o_n | m_1, \ldots, m_n) \Mdot\]

Empirical models are usually assumed to satisfy a \emph{compatibility} condition:
that marginal distributions agree on overlapping contexts, i.e. for all $C$ and $C'$ in $\M$, $e_C|_{C \cap C'} = e_{C'}|_{C \cap C'}$.
In the case of multipartite scenarios, this corresponds to the familiar \emph{no-signalling} condition.

\subsection{Contextuality and non-locality}
An empirical model is said to be \emph{non-contextual} if
there is a distribution on 
assignments of outcomes to all the measurements, $d \in \Distr(O^X)$,
that marginalises to the empirical probabilities for each context, i.e. $\Forall{C \in \M} d|_C = e_C$.
Note that this means there is a deterministic, non-contextual hidden-variable theory with the set of global assignments $O^X$
serving as a canonical hidden variable space.
Indeed, the existence of such a global distribution is in fact equivalent to the existence of a probabilistic hidden variable theory that is factorisable,
a notion that in multipartite scenarios specialises to the standard formulation of Bell locality:
there is a set of hidden variables $\Lambda$, a distribution in $h\in\Distr(\Lambda)$, and ontic probabilities
$\Prob(\vo | \vm, \lambda)$ that are consistent with
the empirical ones, i.e. for all $\vm \in \M$ and $\vo \in O_n$
\[
\sum_{\lambda\in\Lambda}\Prob(\vo | \vm, \lambda)h(\lambda)
=
\Prob(\vo | \vm)
=
e_\vm(\vo)
\Mcomma\] 
and that factorise when conditioned on each $\lambda \in \Lambda$, i.e.
\[
\Prob(\vo | \vm, \lambda)
=
\prod_{i=1}^n \Prob(o_i | m_i,\lambda)
\Mdot\]
where the probabilities on the right-hand side are obtained as the obvious marginals.
The equivalence between the two formulations of non-contextuality or locality --
in terms of a probability distribution on global assignments (canonical deterministic hidden variable theory)
and in terms of factorisable hidden variable theory --
was proven in \cite{AbramskyBrandenburger} for general measurement scenarios,
vastly extending a result by Fine \cite{Fine82}.
This justifies viewing non-locality as the special case of contextuality in multipartite systems.

For some empirical models, it suffices to consider their possibilistic content,
i.e. whether events are possible (non-zero probability) or impossible (zero probability),
to detect the presence of contextuality.
In this case, we say that the model is \emph{logically contextual}.
An even stronger form of contextuality, which will be our main concern in this article,
arises when
no global assignment of outcomes to all measurements is consistent with the events deemed possible by the model:
the empirical model $e$ is said to be \emph{strongly contextual} if there is no assignment $\fdec{g}{X}{O}$ such that
$\Forall{C \in \M} e_C(g|_C) > 0$.
In the particular case of multipartite scenarios,
such a global assignment is determined by a family of maps $\fdec{g_i}{X_i}{O}$ for each site $i$ so that $\fdec{g = \bigsqcup_{i=1}^n g_i}{\bigsqcup_{i=1}^n X_i}{O}$. The consistency condition then reads:
for any choice of measurements
$\vm = \tuple{m_1, \ldots, m_n} \in  \prod X_i$,
writing $g(\vm) = \tuple{g_1(m_1),\ldots,g_n(m_n)}$, we have
\[e_\vm(g(\vm)) = \Prob(g(\vm) | \vm) = \Prob(g_1(m_1),\ldots,g_n(m_n) | m_1, \ldots, m_n) > 0 \Mdot\]

As mentioned in Section~\ref{sec:intro},
strong contextuality was shown in \cite{AbramskyBrandenburger} to exactly capture the notion of maximal contextuality.
The proof of this equivalence depends crucially on the finiteness of the number of measurements.
If one would consider an infinite number of measurements,
a situation could occur in which there is a global assignment $g$ consistent with the model,
in the sense that $\Forall{C \in \M} e_C(g|_C)>0$, but where $\inf_{C\in \M}{e_C(g|_C)} = 0$,
in which case $g$ does not correspond to any positive fraction of the model.
This will indeed be the case for all the consistent global assignments described in this paper.
Note, however, that proving the failure of strong contextuality
in a scenario with an infinite number of measurements,
even if the witnessing global assignment has  $\inf_{C \in \M}e_C(g|C) = 0$,
is nonetheless sufficient to show that maximal contextuality cannot be realised using only a finite subset of the measurements.

\subsection{Quantum realisable models}\label{ssec:quantum}
We are mainly concerned with empirical models that are realisable by quantum systems.
This means that one can find a
quantum state and associate to each measurement label a quantum measurement in the same Hilbert space such that 
measurements in the same context commute and the probabilities of the various outcomes
are given by the Born rule.

More specifically, we are concerned with models arising from $n$-qubit systems with local, i.e. single-qubit, measurements.
The Bloch sphere representation of one-qubit pure states will be useful:
assuming a preferred orthonormal basis $\enset{\ket{0},\ket{1}}$ of $\CC^2$,
we shall use the notation
\[
\blTP \defeq \cos{\frac{\theta}{2}} \ket{0} +  e^{i\varphi}\sin{\frac{\theta}{2}} \ket{1}
\]
for any $\theta \in [0,\pi]$ and $\varphi \in [0, 2\pi)$.

Any single-qubit projective measurement is fully determined by specifying such a normalised vector in $\CC^2$,
namely the pure state corresponding to the $+1$ eigenvalue or outcome.
Hence, the set of local measurements for a single qubit is labelled by 
\[\LM = [0,\pi] \times [0, 2\pi)\]
The quantum measurement determined by $(\theta,\varphi) \in \LM$
has eigenvalues $O = \enset{+1,-1}$ with the eigenvector corresponding to outcome $o \in O$ given by:
\[
\blTPO \defeq
\begin{cases}
\blTP
& \text{if $o = +1$} \\
\bl{\pi-\theta}{\varphi + \pi}
& \text{if $o = -1$}
\end{cases}
\]

Throughout this paper, we shall be considering the $n$-partite measurement scenario with $X_i = \LM$ for every site.
Measurement contexts correspond to a choice of single qubit measurements for each of the $n$ sites, represented by a tuple
$\TPv = \tuple{\TPone,\ldots,\TPn}$.
Performing all the measurements of a context in parallel yields an outcome $\vo = \tuple{o_1, \ldots, o_n} \in O^n$.
The vector corresponding to this outcome is denoted  
\[
\blTPOv \defeq \blTPOone \otimes \cdots \otimes \blTPOn \Mdot
\]
We shall also find it useful to write 
\[
\blTPv \defeq  \blTPone \otimes \cdots \otimes \blTPn = \blTPOOv{\tuple{+1, \ldots, +1}}
\]
for the vector corresponding to the joint outcome assigning $+1$ at every site.

An $n$-qubit state $\ket{\psi}$
determines an empirical model $e^{\ket{\psi}}$ for this measurement scenario:
\[e^{\ket{\psi}}_{\TPv}(\vo) = \Prob^{\ket{\psi}}(o_1,\ldots,o_n | \TPone,\ldots,\TPn) \defeq |\braketTPOv{\psi}|^2 \Mdot\]
We are concerned with checking for strongly non-local behaviour on such a model.
As explained in the previous section, this amounts to checking for the existence of maps
$\fdec{g_i}{\LM}{O}$ for each site such that for any choice of measurements $\TPv$,
the corresponding outcome has positive probability:
\begin{align*}
e_{\TPv}(g\TPv) &= \Prob^{\ket{\psi}}(g_1\TPone,\ldots, g_n\TPn | \TPone,\ldots,\TPn) \\
&= |\braketTPOOv{g\TPv}{\psi}|^2 > 0 \Mdot
\end{align*}
Given that these are quantum probabilities, we can rephrase this condition in terms of non-vanishing amplitudes:
$\braketTPOOv{g\TPv}{\psi} \neq 0$.

The following fact will be used throughout. Suppose we want to check the consistency with the empirical model of a given
global assignment $g = \bigsqcup_{i=1}^n g_i$. If this assignment satisfies
\begin{equation}\label{eq:g-negpreserve}
\Forall{i \in \enset{1, \ldots,n}}  g_i(\theta,\varphi) = -g_i(\pi - \theta,\varphi+\pi) \Mcomma
\end{equation}
that is, measurements with $+1$ eigenstates diametrically opposed in the Bloch spehere (i.e. measurements that are the negation of each other) are assigned opposite outcomes,
then
\[
\blTPOO{g_i\TP} =
\begin{cases}
\blTP
& \text{if $g_i\TP = +1$} \\
\bl{\pi-\theta}{\varphi + \pi}
& \text{if $g_i\TP = -1 \quad (\bimplies g_i(\pi - \theta,\varphi+\theta) = +1)$}
\end{cases}
%\\&=
%\begin{cases}
%\blTP
%& \text{if $g_i\TP = +1$} \\
%\bl{\pi-\theta}{\varphi + \pi}
%& \text{if $g_i(\pi - \theta,\varphi+\theta) = +1$} \Mdot
%\end{cases}
\]
meaning that $\blTPOOv{g\TPv} = \bl{\vtheta'}{\vphi'}$ with $g_i(\theta'_i,\varphi_i') = +1$ for all $i$.
In other words, 
should we wish to calculate the amplitude for a joint outcome $\vo$ on a given context $\TPv$,
we may equivalently calculate the amplitude for the joint outcome $\tuple{+1,\ldots,+1}$ on a new context $(\vtheta',\vphi')$ obtained by substituting $\theta_i \mapsto \pi - \theta_i$ and $\varphi_i \mapsto \pi + \varphi_i$ for all $i$ such that $o_i = -1$.
Therefore, it suffices to verify the equation $\braketTPOOv{g\TPv}{\psi} \neq 0$ for all contexts whose measurements are all assigned $+1$.
Indeed, the same is true if \eqref{eq:g-negpreserve} is relaxed to simply say that $g_i(\pi - \theta,\varphi+\pi)=-1 \implies g_i(\theta,\varphi)=+1$.
Incidentally, even though we shall not need this fact,
note that if there is any global assignment consistent with the model, there will be one that satisfies \eqref{eq:g-negpreserve},
for this would only require a subset of the conditions.

We conclude this subsection with two observations regarding these particular quantum empirical models.
First, note that local unitaries (LU) on the state don't affect non-locality, or indeed strong non-locality, of the resulting empirical model.
This follows from the fact that by moving from the Schr\"odinger to the Heisenberg picture, we may equivalently leave the state fixed and apply the corresponding unitaries to the sets of available local measurements.
Since the available local measurements are all the projective one-qubit measurements, a local unitary, which can be seen as a rotation of the Bloch sphere, merely maps this set to itself.
Secondly, if we are dealing with a product state of $n$-qubits, $\ket{\psi} = \ket{\psi_1} \otimes \cdots \otimes \ket{\psi_n}$,
then the resulting empirical model is necessarily local. This is because the
probabilities factorise:
\[
\Prob^{\ket{\psi}}(\vo | \TPv)
=
\left|\braketTPOv{\psi}\right|^2
=
%\left|\left(\brablTPone \otimes \cdots \otimes \brablTPn\right)\left(\ket{\psi_1} \otimes \cdots \otimes \ket{\psi_n}\right)\right|^2
%=
\left|\prod_{i=1}^n\braketTPOi{\psi_i}\right|^2
=
\prod_{i=1}^n\left|\braketTPOi{\psi_i}\right|^2
\Mdot
\]

\subsection{SLOCC classes of three-qubit states}

A classification of multipartite quantum states by their degree of entanglement is given by the notion of LOCC (local operations and classical communication) equivalence \cite{BennettEtAl96:ConcentratingLO,Nielsen99:EntanglementTransformations,KentLindenMassar99:LOCC}.  A protocol is said to be LOCC if it is of the following form: each party may perform local measurements and transformations on their system, and may communicate measurement outcomes to the other parties, so that local operations may be conditioned on measurement outcomes anywhere in the system.   A state $\ket{\psi_1}$ is LOCC-convertible to a state $\ket{\psi_2}$ if there exists a LOCC protocol that \emph{deterministically} produces $\ket{\psi_2}$ when starting with $\ket{\psi_1}$.  Intuitively, such a protocol cannot increase the degree of entanglement and so we think of $\ket{\psi_1}$ as being at least as entangled as $\ket{\psi_2}$.  The notion of LOCC-convertibility defines a preorder%
\footnote{A preorder is a reflexive and transitive relation;
i.e. it is like a partial order except that it can deem two distinct elements equivalent.}
on multipartite states that in turn yields a notion of LOCC-equivalence of states: the states $\ket{\psi}$ and $\ket{\phi}$ are LOCC-equivalent when $\ket{\psi}$ is LOCC-convertible to $\ket{\phi}$ and \textit{vice versa}.  The LOCC-convertibility preorder then naturally defines a partial order on the collection of LOCC equivalence classes of states.

A coarser classification of multipartite quantum states is given by relaxing the requirement that our conversion protocols succeed deterministically to the requirement that they succeed with non-zero probability \cite{BennettEtAl00:MeasuresMultipartiteEntanglement}.  The previous paragraph holds true for SLOCC (stochastic LOCC) \emph{mutatis mutandis}.
Note that equivalence of two states under LU transformations implies their SLOCC-equivalence.
More generally, two states are SLOCC-equivalent if and only if they are related by an invertible local operator (ILO) \cite{DurVidalCirac00:3-SLOCC}.

D\"ur, Vidal, and Cirac \cite{DurVidalCirac00:3-SLOCC} classified the SLOCC classes of three-qubit systems and found there to be exactly six classes (see Figure~\ref{fig:hasse}).  The GHZ and W states are representatives of the two maximal, non-comparable classes.  Three intermediate classes are characterised by bipartite entanglement between two of the qubits, which are in a product with the third.  Finally, the minimal class is given by product states.

By the last observation in the previous section, it is obvious that a state in the A--B--C class cannot realise non-locality,
and that the case of a state in one of the intermediate classes can be reduced to that of the two qubits that are entangled.
Hence, we shall first discuss strong non-locality for two-qubit states and then proceed in turn to each of the maximal SLOCC classes of three-qubit states, W and GHZ.
\begin{figure}
\begin{center}
\begin{tikzpicture}
    \node (ghz) at (-1,0) {GHZ};
    \node (w) at (1,0) {W};
    
    \node (a)  at (-2,-1)  {A--BC};
        \node (b)  at (0,-1)  {B--AC};
\node (c) at (2,-1) {C--AB};

    \node (p) at (0,-2) {A--B--C};

\draw [thick, ->] (ghz) -- (a);
\draw [thick, ->] (ghz) -- (b);
\draw [thick, ->] (ghz) -- (c);

\draw [thick, ->] (w) -- (a);
\draw [thick, ->] (w) -- (b);
\draw [thick, ->] (w) -- (c);

\draw [thick, ->] (a) -- (p);
\draw [thick, ->] (b) -- (p);
\draw [thick, ->] (c) -- (p);
\end{tikzpicture}
\end{center}
\caption{\label{fig:hasse} Hasse diagram of the partial order of three-qubit SLOCC classes.}
\end{figure}
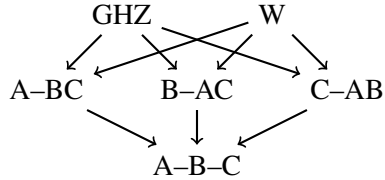

\section{Two-qubit states are not strongly non-local}\label{sec:bipartite}

Every two-qubit state can be written, up to LU, uniquely as 
\begin{equation}\label{equ: bipartite}
\kp=\cos\delta\ket{00}+\sin\delta\ket{11},
\end{equation}
where $\delta\in [0,\frac{\pi}{4}]$. The state \eqref{equ: bipartite} is either: the product state $\ket{00}$, which is obviously non-contextual since it is separable, when $\delta=0$; or an entangled state in the SLOCC class of the Bell state $\ket{\Phi^+}=\frac{1}{\sqrt{2}}(\ket{00}+\ket{11})$, when $\delta > 0$.

\begin{restatable}[equivalent to {\cite[Theorem 1]{BrassardMethotTapp2005:MinimumEntangledStatePseudoTelepathy}}]{theorem}{thmBipartite}\label{thm: bipartite}
Two-qubit states do not admit strongly non-local behaviour. 
\end{restatable}
\begin{proof}%[Proof of Theorem \ref{thm: bipartite}]
This proof rests on defining an explicit global assignment $g:\LM\sqcup \LM\rightarrow O$ consistent with the possible events of the empirical model. More specifically, the map $g$ is obtained by assigning outcome $+1$ to one hemisphere of the Bloch sphere, and $-1$ to the other, with special conditions on the poles and a slight asymmetry between the two parties. 

We start by computing the amplitude $\braketTPv{\psi}$ of measuring $(\vtheta,\vphi)=\tuple{(\theta_1,\varphi_1),(\theta_2,\varphi_2)}$ on the general state \eqref{equ: bipartite} and obtaining joint outcome $\tuple{+1,+1}$:
\[
\braketTPv{\psi}=\cos\delta \cos\frac{\theta_1}{2}\cos\frac{\theta_2}{2}+ \sin\delta\sin\frac{\theta_1}{2}\sin\frac{\theta_2}{2}e^{-i(\varphi_1+\varphi_2)}
\]
Since $\delta=0$ gives rise to a product state, we will assume $\delta\neq 0$.

We define the following maps:
\begin{align*}
\fdecdef{g_1&}{\LM}{O}{(\theta,\varphi)}{\begin{cases}
+1 & \text{ if } \theta=\pi \text{ or } \left(\theta\neq 0 \text{ and }\varphi\in\left[-\frac{\pi}{2},\frac{\pi}{2}\right)\right)  \\
-1 & \text{ if }  \theta=0 \text{ or } \left(\theta\neq\pi \text{ and } \varphi\in \left[\frac{\pi}{2},\frac{3\pi}{2}\right) \right)
\end{cases}}\\
\fdecdef{g_2&}{\LM}{O}{(\theta,\varphi)}{\begin{cases}
+1 & \text{ if } \theta=\pi \text{ or } \left(\theta\neq 0 \text{ and }\varphi\in\left(-\frac{\pi}{2},\frac{\pi}{2}\right]\right)  \\
-1 & \text{ if } \theta=0 \text{ or } \left(\theta\neq\pi \text{ and } \varphi\in \left(\frac{\pi}{2},\frac{3\pi}{2}\right] \right)
\end{cases}}
\end{align*}
and let $\fdec{g\defeq g_1\sqcup g_2}{\LM\sqcup\LM}{O}$ be a global assignment. A graphical representation of the map $g$ can be found in Figure \ref{fig: Assignment_2}.
\begin{figure}[htbp]
\centering
\includegraphics[scale=0.5]{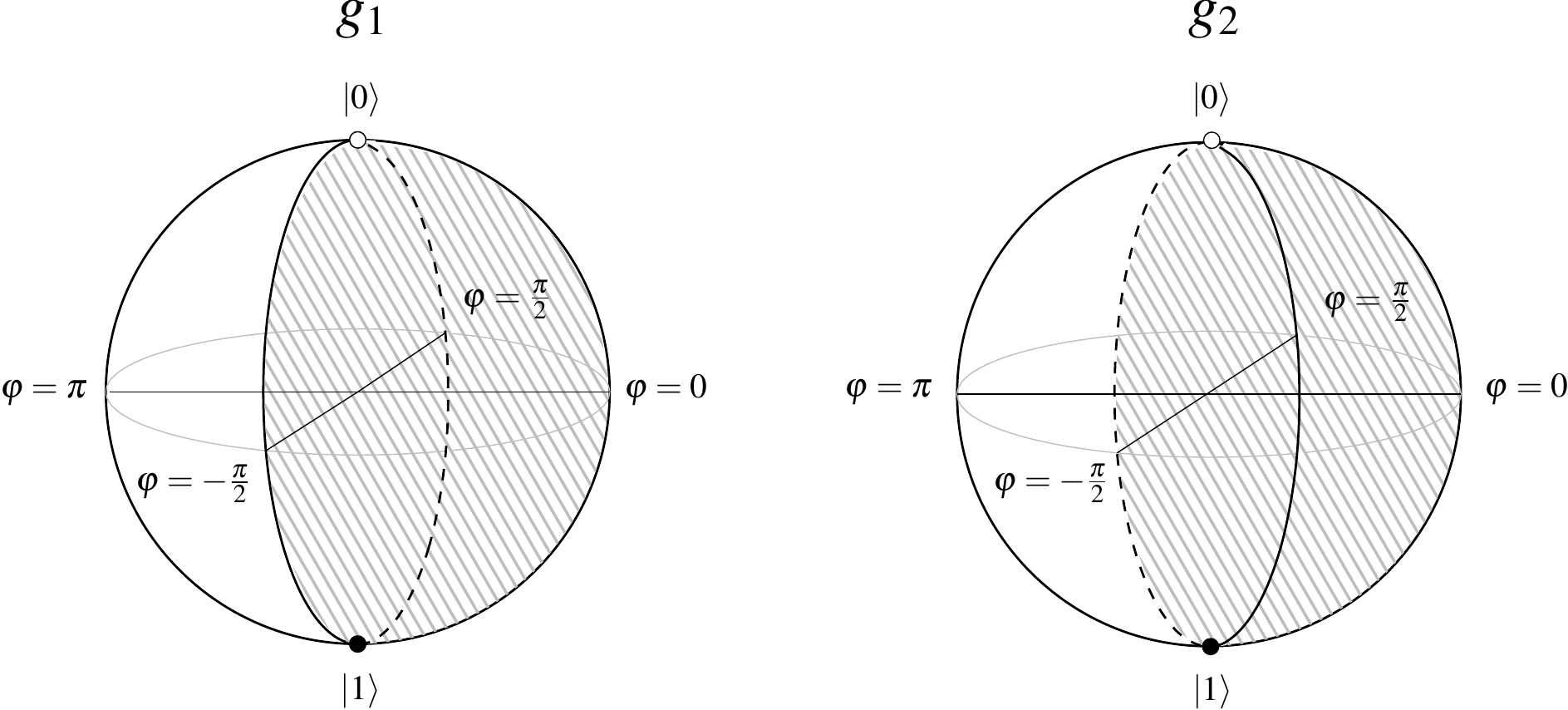}
\caption{Graphical representation of the global assignment $g$. The shaded region corresponds to the measurements mapped to $+1$ by $g$.}\label{fig: Assignment_2}
\end{figure}

Let $(\vtheta,\vphi)$ be a context whose individual measurements are mapped to $+1$ by $g$ (see Section~\ref{ssec:quantum} for why this is sufficient). In particular, it holds that $\theta_1,\theta_2\neq 0$. Since $\delta\neq 0$, we have 
\[
s\defeq\sin\delta \sin\frac{\theta_1}{2}\sin\frac{\theta_2}{2}>0
\Mand
c\defeq\cos\delta \cos\frac{\theta_1}{2}\cos\frac{\theta_2}{2}\ge 0.
\]
If $\theta_1=\pi$ or $\theta_2=\pi$, then $c=0$, which implies $\braketTPv{\psi}=se^{-i(\varphi_1+\varphi_2)}\neq 0$. Otherwise, $\varphi_1\in\left[-\frac{\pi}{2},\frac{\pi}{2}\right)$, $\varphi_2\in\left(-\frac{\pi}{2},\frac{\pi}{2}\right]$ and $\braketTPv{\psi}=c+se^{-i(\varphi_1+\varphi_2)}$ is the sum of a positive real number and a non-zero complex number. For it to be zero, the latter must be real and negative, hence 
\[
\varphi_1+\varphi_2=\pi\mod 2\pi,
\]
which cannot be satisfied in the domain of $\varphi_1,\varphi_2$.
\end{proof}

\section{W-SLOCC states are not strongly non-local}\label{sec:W}
A general state in the SLOCC class of the W state $\W=\frac{1}{\sqrt{3}}(\ket{001}+\ket{010}+\ket{100})$ can be written, up to LU, as
\begin{equation}\label{equ: W-SLOCC}
\WSLOCC =\sqrt{a}|001\rangle+\sqrt{b}|010\rangle+\sqrt{c}|100\rangle+\sqrt{d}|000\rangle,
\end{equation}
where $a,b,c\in\mathbb{R}_{>0}$ and $d\defeq 1-(a+b+c)\in\mathbb{R}_{\ge0}$. Indeed, we can obtain $\WSLOCC$ from $\W$ by applying the following ILO to $\W$:
\[
\left(
\begin{matrix}
\sqrt{a} & \sqrt{b}\\
0 & \sqrt{c}
\end{matrix}
\right)
\otimes
\left(
\begin{matrix}
\sqrt{3} & 0\\
0 & \frac{\sqrt{3b}}{\sqrt{a}}
\end{matrix}
\right)
\otimes
I.
\]

In order to prove that W-SLOCC states are not strongly non-local, we will need the following lemma, which generalises the argument used in the proof of Theorem~\ref{thm: bipartite} to show that the amplitude could not be zero. 

\begin{lemma}\label{lem: complex sum}
Let $z_1,\ldots,z_m\in\mathbb{C}$, and $r\in\mathbb{R}_{\ge0}$.
If 
\begin{equation}\label{equ: sum}
\sum_{i=1}^mz_i+r=0,
\end{equation}
then one of the following holds:
(i) $z_1=\cdots = z_m=r=0$;
(ii) there exists a $z_k\in\mathbb{R}_{< 0}$;
(iii) there exists $1\leq k, l\leq m$ such that \emph{$\mbox{Arg}(z_k)\in(0,\pi)$} and \emph{$\mbox{Arg}(z_l)\in(-\pi,0)$}.
\end{lemma}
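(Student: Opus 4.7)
The plan is to prove the contrapositive: assuming that neither case (ii) nor case (iii) holds, I will derive case (i). The argument will essentially be a case analysis on which open half-plane contains the $z_i$, followed by reading off the constraints from taking real and imaginary parts of \eqref{equ: sum}.

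First, I would observe that ``case (iii) fails'' means that either no $z_i$ has argument in $(0,\pi)$, or no $z_i$ has argument in $(-\pi,0)$. These two sub-cases are symmetric: conjugating \eqref{equ: sum} (which preserves $r \in \RRpz$) interchanges them. So I can assume without loss of generality that every $z_i$ has $\mathrm{Im}(z_i) \leq 0$.

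Next, I would take the imaginary part of \eqref{equ: sum}. Since $r$ is real,
\[
\sum_{i=1}^m \mathrm{Im}(z_i) \;=\; 0 \Mdot
\]
As each summand is non-positive, each must vanish: $\mathrm{Im}(z_i) = 0$ for all $i$. So every $z_i$ is in fact real, and the arguments of the non-zero $z_i$ lie in $\{0,\pi\}$.

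Finally, I would invoke the assumption that case (ii) also fails: no $z_k$ is a strictly negative real, i.e.\ no $z_i$ has argument $\pi$. Therefore every $z_i$ is a non-negative real, and \eqref{equ: sum} becomes a sum of non-negative reals summing to zero, forcing $z_1 = \cdots = z_m = r = 0$. This is case (i), completing the proof. There is no genuine obstacle here; the only mild subtlety is being careful with the open/closed conditions on the arguments so that the three listed cases form an exhaustive (though not disjoint) cover, and using the reality of $r$ to justify the conjugation symmetry between the upper and lower half-plane sub-cases.
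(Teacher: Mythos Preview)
Your proof is correct and uses essentially the same idea as the paper: both arguments hinge on taking imaginary parts of \eqref{equ: sum} and on the fact that non-negative reals summing to zero must all vanish. The paper argues directly (if all $z_i$ are real then (i) or (ii); if some $z_k$ has non-zero imaginary part then another $z_l$ must have imaginary part of opposite sign, giving (iii)), whereas you package the same reasoning as a contrapositive with a conjugation-symmetry reduction, but the mathematical content is the same.
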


\begin{proof}
If all the $z_i$ are real, then, since $r$ is non-negative, we must have either (i) or (ii). Now, suppose there is a $1\leq k\leq m$ such that $\mbox{Im}(z_k)\neq 0$. By \eqref{equ: sum}, we have $\sum_{i=1}^n\mbox{Im}(z_i)=0$. Thus,
\[
\sum_{i\neq k}\mbox{Im}(z_i)=-\mbox{Im}(z_k) \quad\bimplies\quad\sum_{i\neq k}|z_i|\sin(\mbox{Arg}(z_i))=-|z_k|\sin(\mbox{Arg}(z_k)).
\]
Hence, there exists at least one $l\neq k$ for which the sign of
$\mbox{Im}(z_l)$  %$\sin(\mbox{Arg}(z_l))$
is opposite to that of
$\mbox{Im}(z_k)$, %$\sin(\mbox{Arg}(z_k))$,
which implies that $z_l$ and $z_k$ are in different sides of the real axis, implying the condition about $\mbox{Arg}(z_l)$ and $\mbox{Arg}(z_k)$.
\end{proof}

\begin{restatable}{theorem}{thmWSLOCC}\label{thm: W-SLOCC}
States in the SLOCC class of W do not admit strongly non-local behaviour.
\end{restatable}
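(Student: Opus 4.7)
The plan is to exhibit an explicit global assignment $g = g_1 \sqcup g_2 \sqcup g_3 : \LM \sqcup \LM \sqcup \LM \to O$ consistent with the empirical model of $\WSLOCC$, thereby showing strong non-locality fails. I would follow the same template as in Theorem~\ref{thm: bipartite}, but with a crucial asymmetry required to handle the $d=0$ boundary case.

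First I would expand the amplitude $\braketTPv{\WSLOCCt}$ as a sum of four contributions from the four basis states in \eqref{equ: W-SLOCC}: a non-negative real summand $T_0 \defeq \sqrt{d}\cos\frac{\theta_1}{2}\cos\frac{\theta_2}{2}\cos\frac{\theta_3}{2}$, together with complex terms $T_1, T_2, T_3$ (with coefficients $\sqrt{a},\sqrt{b},\sqrt{c}$) carrying phases $e^{-i\varphi_3}, e^{-i\varphi_2}, e^{-i\varphi_1}$ and products of two cosines and one sine of half-angles. The obvious symmetric extension of the bipartite assignment---sending $\theta=0$ to $+1$ at every site---breaks down when $d=0$, since then the all-$\theta_i=0$ context would yield $\braket{000}{\WSLOCCt}=\sqrt{d}=0$. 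I would therefore set $g_1, g_2$ so that $\theta=0$ maps to $+1$ (and $\theta=\pi$ to $-1$), while $g_3$ does the reverse; on the open interior $\theta \in (0,\pi)$ all three sites map $\varphi \in \{0\}\cup(\pi,2\pi)$ to $+1$ and $\varphi \in (0,\pi]$ to $-1$.

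A routine case check verifies that $g$ satisfies the negation-preserving condition \eqref{eq:g-negpreserve}, so by the reduction in Section~\ref{ssec:quantum} it is enough to show $\braketTPv{\WSLOCCt} \neq 0$ on every context for which each $g_i$ assigns $+1$. On any such context, $\theta_1,\theta_2 < \pi$ and $\theta_3 > 0$, so $\cos\frac{\theta_1}{2},\cos\frac{\theta_2}{2},\sin\frac{\theta_3}{2}$ are all strictly positive and $T_1 \neq 0$. Moreover, whenever a coordinate $\theta_j$ lies in the open interior $(0,\pi)$, the $+1$-assignment forces $\varphi_j \in \{0\}\cup(\pi,2\pi)$, which puts $\mbox{Arg}(T_j)\in[0,\pi)$. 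The boundary subcase $\theta_3=\pi$ lifts the constraint on $\varphi_3$, but then $\cos\frac{\theta_3}{2}=0$ annihilates $T_0, T_2, T_3$ and only $T_1 \neq 0$ survives, so the lack of control over $\mbox{Arg}(T_1)$ is harmless. Applying Lemma~\ref{lem: complex sum} to $T_1+T_2+T_3+T_0$ with $r=T_0\ge 0$ and $z_i=T_i$ rules out cancellation: condition (i) fails since $T_1\neq 0$; condition (ii) fails because arguments in $[0,\pi)$ exclude $\pi$, ruling out negative reals; condition (iii) fails since no nonzero $T_i$ lies in the open lower half-plane.

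The main obstacle is precisely the case $d=0$, where $\WSLOCCt$ has no $\ket{000}$ component: this is what forces the pole asymmetry between $g_3$ and $g_1,g_2$, preventing a direct reuse of the bipartite construction. Lemma~\ref{lem: complex sum} then takes the place of the ad-hoc two-term argument used in Theorem~\ref{thm: bipartite}, allowing a uniform treatment of all $+1$-contexts with the same global assignment regardless of the parameters $a,b,c,d$.
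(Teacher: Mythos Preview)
Your proposal is correct and essentially identical to the paper's proof: the same global assignment (your interval $\{0\}\cup(\pi,2\pi)$ is the paper's $(-\pi,0]$ written mod $2\pi$), the same pole asymmetry between $g_3$ and $g_1,g_2$, the same case split on $\theta_3=\pi$, and the same appeal to Lemma~\ref{lem: complex sum}. The only slip is notational: in the sentence linking $\theta_j\in(0,\pi)$ to $\mbox{Arg}(T_j)\in[0,\pi)$ the indices are crossed ($\mbox{Arg}(T_1)$ depends on $\varphi_3$ and $\mbox{Arg}(T_3)$ on $\varphi_1$), though your correct handling of the $\theta_3=\pi$ boundary shows you have the right dependence in mind.
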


\begin{proof}%[Proof of Theorem \ref{thm: W-SLOCC}]
Similarly to the bipartite case of Theorem \ref{thm: bipartite}, the key idea of the proof is the definition of a global assignment $g:\LM\sqcup \LM\sqcup\LM\rightarrow O$ whose restriction to each context is contained in the support of the model. Once again, $g$ is obtained by partitioning the Bloch sphere into two hemispheres to which are assigned different outcomes, with asymmetric polar conditions across the parties.

We start by computing the amplitude $\braketTPv{\WSLOCCt }$ of measuring $(\vtheta,\vphi)$ on the general state \eqref{equ: W-SLOCC} and obtaining joint outcome $\tuple{+1,+1,+1}$:

\begin{equation}\label{equ: W amplitude}
\begin{split}
\braketTPv{\WSLOCCt }&= \underbrace{\sqrt{a}\left(\cos\frac{\theta_1}{2}\cos\frac{\theta_2}{2}\sin\frac{\theta_3}{2}
e^{-i\varphi_3}\right)}_{\equalscolon z_3\in\mathbb{C}}+\underbrace{\sqrt{b}\left(\cos\frac{\theta_1}{2}\cos\frac{\theta_3}{2}\sin\frac{\theta_2}{2}
e^{-i\varphi_2}\right)}_{\equalscolon z_2\in\mathbb{C}}\\
+ & \underbrace{\sqrt{c}\left(\cos\frac{\theta_2}{2}\cos\frac{\theta_3}{2}\sin\frac{\theta_1}{2}
e^{-i\varphi_1}\right)}_{\equalscolon z_1\in\mathbb{C}}+\underbrace{\sqrt{d}\left(\cos\frac{\theta_1}{2}\cos\frac{\theta_2}{2}\cos\frac{\theta_3}{2}
\right)}_{\equalscolon r\in\mathbb{R}_{\ge 0}}.
\end{split}
\end{equation}
Define the following functions:
\begin{align*}
\fdecdef{h=g_1=g_2&}{\LM}{O}{(\theta,\varphi)}{\begin{cases}
+1 & \text{ if } \theta=0 \text{ or } \left(\theta\neq \pi \text{ and } \varphi\in(-\pi,0]\right)\\
-1 & \text{ if } \theta=\pi \text{ or } \left(\theta\neq 0 \text{ and } \varphi\in(0,\pi]\right) 
\end{cases}}\\
\fdecdef{g_3&}{\LM}{O}{(\theta,\varphi)}{\begin{cases}
+1 & \text{ if } \theta=\pi \text{ or }\left(\theta\neq 0 \text{ and } \varphi\in(-\pi,0]\right)\\
-1 & \text{ if } \theta= 0 \text{ or } \left(\theta\neq \pi \text{ and } \varphi\in(0,\pi]\right)
\end{cases}}
\end{align*}
and let $\fdec{g\defeq h\sqcup h\sqcup g_3}{\LM\sqcup\LM\sqcup\LM}{O}$ be a global assignment. The map $g$ is graphically represented in Figure \ref{fig: Assignment_3}.
\begin{figure}[htbp]
\centering
\includegraphics[scale=0.5]{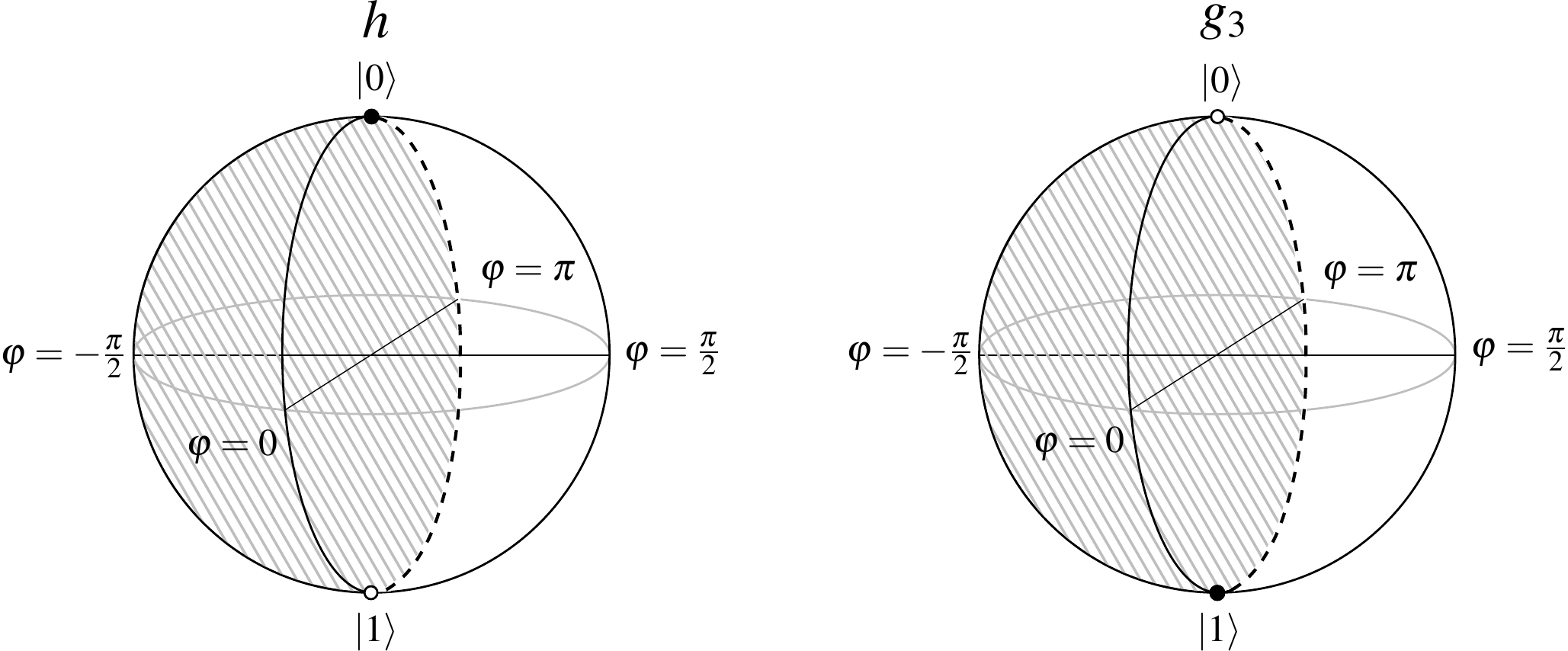}
\caption{Graphical representation of the global assignment $g$. The shaded region corresponds to the measurements mapped to $+1$ by $g$.}\label{fig: Assignment_3}
\end{figure}

Let $(\vtheta,\vphi)$ be a context whose individual measurements are mapped to $+1$ by $g$. In particular, $\theta_1,\theta_2\neq\pi$ and $\theta_3\neq 0$. Since $a>0$, we have 
\[
|z_3|=\sqrt{a}\cos\frac{\theta_1}{2}\cos\frac{\theta_2}{2}\sin\frac{\theta_3}{2}>0,
\]
which implies $z_3\neq 0$. Now, if $\theta_3=\pi$, then $z_1=z_2=r=0$ and $\braketTPv{\WSLOCCt }=z_3\neq 0$.

Otherwise, $\theta_3\neq \pi$ and $\varphi_3\in(-\pi,0]$, implying that $\mbox{Arg}(z_3)=-\varphi_3\in[0,\pi)$. For $i=1,2$, we either have $\theta_i=0$ or $\varphi_i\in(-\pi, 0]$, implying that $z_i=0$ or $\mbox{Arg}(z_i)=-\varphi_i\in[0,\pi)$. Using Lemma~\ref{lem: complex sum}, we conclude that $\braketTPv{\WSLOCCt}\neq 0$: (i) fails because $z_3\neq 0$, while (ii) and (iii) fail because $\mbox{Arg}(z_i)\in[0,\pi)$ whenever $z_i\neq 0$.
\end{proof}

\section{Strong non-locality in the SLOCC class of GHZ}\label{sec:GHZ}

\subsection{The $n$-partite GHZ state and local equatorial measurements}

Before we tackle the general case of GHZ-SLOCC states, we consider the GHZ state itself. We show that equatorial measurements are the only relevant ones in the study of strong non-locality for this state. In fact, this holds for the general $n$-partite GHZ state,
\[
\GHZn\defeq \frac{1}{\sqrt{2}}\left(\ket{0}^{\otimes n}+\ket{1}^{\otimes n}\right) \Mcomma
\]
and consequentely, in light of the remark towards the end of Section~\ref{ssec:quantum}, for any state in its LU class.
In the next section, we generalise this result to arbitrary states in the SLOCC class of the tripartite GHZ state,
and study conditions for strong non-locality within this class.

\begin{restatable}{theorem}{thmGHZn}\label{thm: GHZn}
Any strongly non-local behaviour of $\GHZn$ can be witnessed using only equatorial measurements. That is, there is a global assignment $g$ consistent with the model $e^{\GHZn}$ in all contexts that are not exclusively composed of equatorial measurements. 
\end{restatable}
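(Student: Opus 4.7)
The plan is to exhibit a global assignment $\fdec{g}{\LM \sqcup \cdots \sqcup \LM}{O}$ that is consistent with $e^{\GHZn}$ on every context containing at least one non-equatorial measurement. I will take the same map $h$ at each site, i.e.\ $g \defeq h \sqcup \cdots \sqcup h$, and design $h$ to satisfy the negation-preserving condition \eqref{eq:g-negpreserve}. By the observation in Section~\ref{ssec:quantum}, consistency then reduces to checking that the amplitude $\braketTPv{\GHZnt}$ is non-zero on every context $(\vtheta,\vphi)$ whose individual measurements are all sent to $+1$ by $h$ and for which at least one $\theta_i \neq \pi/2$.

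A direct computation yields
\[
\braketTPv{\GHZnt} \;=\; \frac{1}{\sqrt{2}}\bigl( c \,+\, s\, e^{-i\Phi}\bigr)\Mcomma \qquad c \defeq \prod_{i=1}^n \cos\frac{\theta_i}{2} \Mcomma \quad s \defeq \prod_{i=1}^n \sin\frac{\theta_i}{2} \Mcomma \quad \Phi \defeq \sum_{i=1}^n \varphi_i \Mdot
\]
Since $c, s \in \RRpz$, this expression vanishes precisely when either $c = s = 0$, or $c = s > 0$ and $\Phi \equiv \pi \pmod{2\pi}$. I would then define $h$ to mark exactly the open northern hemisphere together with one half of the equator:
\[
h(\theta,\varphi) = +1 \quad\bimplies\quad \theta < \pi/2 \Mor \bigl(\theta = \pi/2 \Mand \varphi \in [0,\pi)\bigr) \Mdot
\]
A short case analysis on the four regions $\theta < \pi/2$, $\theta > \pi/2$, and the two halves of the equator confirms that $h(\pi-\theta,\,\varphi+\pi \bmod 2\pi) = -h(\theta,\varphi)$, so $h$ is negation-preserving.

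The key step is then the non-vanishing argument. In any context where every $h$-value is $+1$, one has $\theta_i \in [0, \pi/2]$ for all $i$, hence $\cos(\theta_i/2) \geq \sin(\theta_i/2) \geq 0$, with strict inequality at any index where $\theta_i < \pi/2$. Under the hypothesis that some $\theta_j \neq \pi/2$, that index $j$ must satisfy $\theta_j < \pi/2$, so $\cos(\theta_j/2) > \sin(\theta_j/2)$. If some $\theta_i = 0$, then $s = 0$ and $c > 0$ (every factor is in $[\tfrac{1}{\sqrt{2}}, 1]$), so the amplitude equals $c/\sqrt{2} > 0$. Otherwise every $\theta_i \in (0, \pi/2]$, giving $c, s > 0$ with $c > s$; the triangle inequality then yields $|c + s e^{-i\Phi}| \geq c - s > 0$ independently of $\Phi$. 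In either sub-case the amplitude is non-zero, which is exactly what is needed. There is no genuine obstacle here; the only delicate point is designing $h$ so that it is total, negation-preserving, and forces $+1$-contexts into the closed northern hemisphere, where the inequality $c \geq s$ (strict whenever any measurement is strictly away from the equator) takes care of the rest.
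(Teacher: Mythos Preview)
Your proof is correct and follows essentially the same approach as the paper: define a hemisphere-based assignment $h$ at each site so that $+1$-contexts lie in the closed northern hemisphere, then observe that the amplitude $c + s e^{-i\Phi}$ cannot vanish because $c > s$ whenever some $\theta_i < \pi/2$. The only cosmetic difference is that the paper sends the \emph{entire} equator to $+1$ (invoking the relaxed one-sided version of \eqref{eq:g-negpreserve}) and phrases the key step as $\prod_i \tan(\theta_i/2) = 1 \Leftrightarrow \theta_i = \pi/2$ for all $i$, whereas you split the equator to make $h$ strictly negation-preserving; both choices lead to the same inequality.
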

\begin{proof}%[Proof of Theorem \ref{thm: GHZn}]
The proof is achieved using a construction of a global assignment similar to the ones previously discussed. 

First, we derive the formula for the amplitude $\braketTPv{\GHZnt}$ of measuring $(\vtheta,\vphi)$ and obtaining joint outcome $\tuple{+1,\ldots,+1}$:

\begin{equation*}
\braketTPv{\GHZnt}= \frac{1}{\sqrt{2}} \left( \prod_{i=1}^n \cos\frac{\theta_i}{2} + e^{-i \sum_{i=1}^n \varphi_i} \prod_{i=1}^n \sin\frac{\theta_i}{2} \right).
\end{equation*}

Consider the function 
\[
\fdecdef{h}{\LM}{O}{(\theta,\varphi)}{\begin{cases}
+1 & \text{ if } \theta\in\left[0,\frac{\pi}{2}\right]\\
-1 & \text{ if } \theta\in \left(\frac{\pi}{2},\pi\right]
\end{cases}}
\]
i.e. $h$ assigns $+1$ to the equator and the northern hemisphere, and $-1$ to the southern hemisphere.
Let $\fdec{g\defeq \bigsqcup_{i=1}^n h}{\bigsqcup_{i=1}^n \LM}{O}$.
We show that this global assignment is consistent with the probabilities at all contexts that include at least a non-equatorial measurement. 

Let $(\vtheta, \vphi)$ be a context whose measurements are mapped to $+1$ by $g$. In particular, $\theta_i\leq\frac{\pi}{2}$ for all $i$. If $\braketTPv{\GHZnt}=0$, then
\[
 \prod_{i=1}^n \cos\frac{\theta_i}{2} = -e^{-i \left(\sum_{i=1}^n \varphi_i\right)} \prod_{i=1}^n \sin\frac{\theta_i}{2}
\]
Taking the modulus of both sides and dividing the right-hand by the left-hand side yields:
\[
\prod_{i=1}^n \tan\frac{\theta_i}{2} = 1
\]
which is verified if and only if $\theta_i=\frac{\pi}{2}$ for all $1\leq i \leq n$. 
\end{proof}

\subsection{Balanced GHZ-SLOCC states and local equatorial measurements}

A general state in the SLOCC class of the GHZ state can be written, up to LU, as
\begin{equation}\label{equ: GHZ class}
\GHZSLOCC=\sqrt{K}(\cos\delta\ket{000} + \sin\delta e^{i\Phi}\ket{\varphi_1}\ket{\varphi_2}\ket{\varphi_3}),
\end{equation}
where $K=(1+2\cos\delta\sin\delta\cos\alpha\cos\beta\cos\gamma\cos\Phi)^{-1}$, and
\[
\ket{\varphi_1} = \cos\alpha\ket{0}+\sin\alpha\ket{1},
\quad
\ket{\varphi_2} = \cos\beta\ket{0}+\sin\beta\ket{1},
\quad
\ket{\varphi_3} = \cos\gamma\ket{0}+\sin\gamma\ket{1},
\]
for some $\delta\in(0,\pi/4]$, $\alpha,\beta,\gamma\in (0,\pi/2]$, and $\Phi\in[0,2\pi)$. Indeed, $\GHZSLOCC$ is obtained from $\GHZ$ via the ILO
\[
\sqrt{2K}\left(
\begin{matrix}
\cos\delta & \sin\delta\cos\alpha e^{i\Phi}\\
0 & \sin\delta\sin\alpha e^{i\Phi}\\
\end{matrix}
\right)
\otimes
\left(
\begin{matrix}
1 & \cos\beta\\
0 & \sin\beta\\
\end{matrix}
\right)
\otimes
\left(
\begin{matrix}
1 & \cos\gamma\\
0 & \sin\gamma\\
\end{matrix}
\right).
\]

In order to prove the results of this section, it is convenient to describe $\GHZSLOCC$ in a slightly different form. By applying local unitaries, we can rewrite it as 
\begin{equation}\label{equ: GHZ-SLOCC}
\GHZSLOCC=\sqrt{K}(\cos\delta\ket{v_{\lambda_1}}\ket{v_{\lambda_2}}\ket{v_{\lambda_3}}+\sin\delta e^{i\Phi}\ket{w_{\lambda_1}}\ket{w_{\lambda_2}}\ket{w_{\lambda_3}}),
\end{equation}
where
\begin{equation}\label{equ: lambda1}
\ket{v_\lambda} = \bl{\lambda}{0} =\cos\frac{\lambda}{2}\ket{0}+\sin\frac{\lambda}{2}\ket{1},
\quad
\ket{w_\lambda} =\bl{\pi-\lambda}{0}=\sin\frac{\lambda}{2}\ket{0}+\cos\frac{\lambda}{2}\ket{1}
\end{equation}
for some $\lambda_i \in [0,\frac{\pi}{2})$, $i=1,2,3$.
The action of this LU can be thought of as choosing a new orthonormal basis for each qubit:
a graphical illustration of this process can be found in Figure~\ref{fig: rotation}.
\begin{figure}[htbp]
\centering
\includegraphics[scale=0.6]{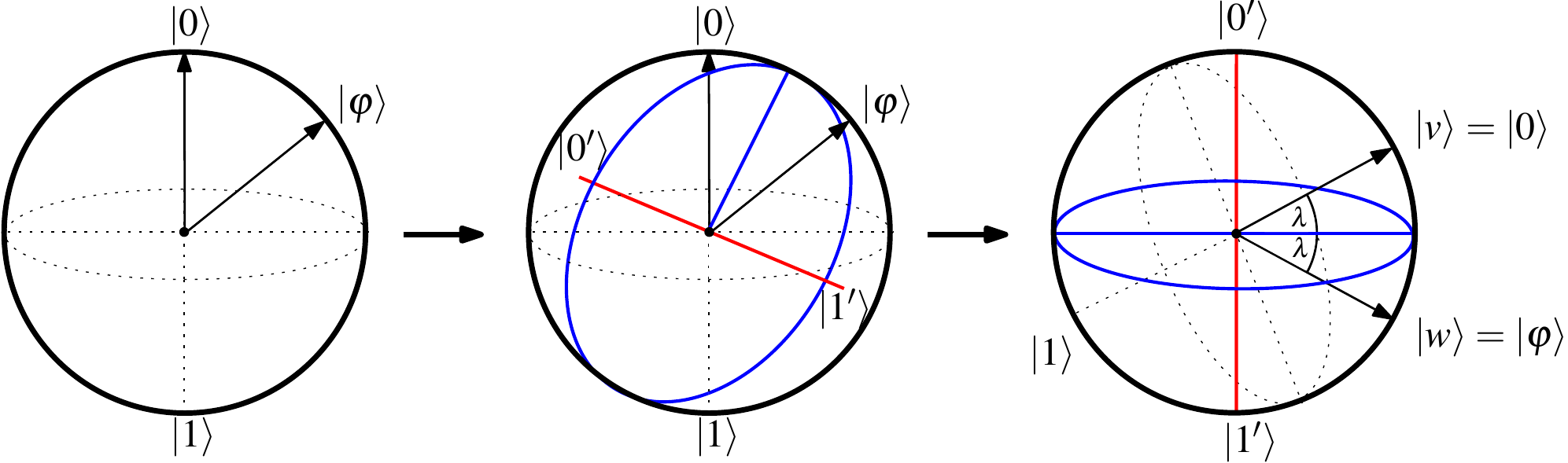}
\caption{Choice of a new basis $\enset{\ket{0'},\ket{1'}}$ for each qubit that allows the state to be described in the form \eqref{equ: GHZ-SLOCC}.}\label{fig: rotation}
\end{figure}
A key advantage of this LU-equivalent description of a general state in the GHZ SLOCC class
is that the equator of the $i$-th qubit's Bloch sphere coincides with
the great circle that bisects the $i$-th components of the two unique product states
that form a linear decomposition of the state.
Note that any state in the GHZ SLOCC class thus uniquely defines an equator in each Bloch sphere. It is to the measurements lying on these that we refer as being \emph{equatorial}.

We say that a state in the GHZ SLOCC class is $\emph{balanced}$
if the coefficients in its unique linear decomposition into a pair of product states have the same complex modulus --
when the state is written in the form \eqref{equ: GHZ-SLOCC}, this corresponds to having $\delta = \frac{\pi}{4}$, hence $\cos\delta = \sin\delta = \frac{1}{\sqrt{2}}$.

\begin{lemma}\label{lem: scalar product}
Let
$\ket{v_\lambda}$ and $\ket{w_\lambda}$ be given as in \eqref{equ: lambda1}, 
%$\ket{v}\defeq \cos\frac{\lambda}{2}\ket{0}+\sin\frac{\lambda}{2}\ket{1}$ and $\ket{w} \defeq \sin\frac{\lambda}{2}\ket{0}+\cos\frac{\lambda}{2}\ket{1}$,
with $\lambda\in[0,\pi/2)$, and consider a measurement $(\theta,\varphi)$ with $\theta\in[0,\pi/2)$, i.e. with $+1$ eigenstate in the `northern hemisphere'.
Then $|\braket{\theta,\varphi}{v_\lambda}|>|\braket{\theta,\varphi}{w_\lambda}|$.
%\begin{equation}\label{equ: inequality}
%|\braket{\theta,\varphi}{v}|>|\braket{\theta,\varphi}{w}|
%\end{equation}
\end{lemma}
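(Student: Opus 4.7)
The plan is to reduce the claim to a straightforward comparison of squared moduli, since both sides are non-negative. First I would write out the two amplitudes explicitly: using $\bra{\theta,\varphi} = \cos(\theta/2)\bra{0} + e^{-i\varphi}\sin(\theta/2)\bra{1}$ together with \eqref{equ: lambda1}, we get
\begin{align*}
\braket{\theta,\varphi}{v_\lambda} &= \cos\tfrac{\theta}{2}\cos\tfrac{\lambda}{2} + e^{-i\varphi}\sin\tfrac{\theta}{2}\sin\tfrac{\lambda}{2}, \\
\braket{\theta,\varphi}{w_\lambda} &= \cos\tfrac{\theta}{2}\sin\tfrac{\lambda}{2} + e^{-i\varphi}\sin\tfrac{\theta}{2}\cos\tfrac{\lambda}{2}.
\end{align*}

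The crucial observation is that when one forms $|\braket{\theta,\varphi}{v_\lambda}|^2$ and $|\braket{\theta,\varphi}{w_\lambda}|^2$, the cross-terms depending on the phase $\varphi$ are identical in both expressions: in each case the cross-term equals $2\cos\tfrac{\theta}{2}\sin\tfrac{\theta}{2}\cos\tfrac{\lambda}{2}\sin\tfrac{\lambda}{2}\,\cos\varphi$. Thus, upon subtracting, the $\varphi$-dependence cancels, and I expect the difference to simplify via double-angle identities to
\[
|\braket{\theta,\varphi}{v_\lambda}|^2 - |\braket{\theta,\varphi}{w_\lambda}|^2
= \bigl(\cos^2\tfrac{\theta}{2}-\sin^2\tfrac{\theta}{2}\bigr)\bigl(\cos^2\tfrac{\lambda}{2}-\sin^2\tfrac{\lambda}{2}\bigr)
= \cos\theta\cos\lambda.
\]

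To finish, it remains to note that the hypotheses $\theta\in[0,\pi/2)$ and $\lambda\in[0,\pi/2)$ force both $\cos\theta>0$ and $\cos\lambda>0$, so the difference is strictly positive, yielding the desired strict inequality. There is no real obstacle here: the whole argument rests on the one algebraic observation that the phase-dependent cross-term has the same magnitude in both inner products (because $v_\lambda$ and $w_\lambda$ differ only by swapping the amplitudes of $\ket{0}$ and $\ket{1}$), which makes the comparison reduce to the purely real product $\cos\theta\cos\lambda$.
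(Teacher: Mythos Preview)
Your proof is correct and follows essentially the same approach as the paper: both compute the squared moduli, observe that the $\varphi$-dependent cross-terms coincide, and factor the remaining difference into a product of two positive quantities. The paper normalises by $\cos\tfrac{\theta}{2}\cos\tfrac{\lambda}{2}$ and uses the substitution $x=\tan\tfrac{\lambda}{2}$, $y=\tan\tfrac{\theta}{2}$ to reach $(1-x^2)(1-y^2)>0$, whereas your direct double-angle simplification to $\cos\theta\cos\lambda>0$ is a touch cleaner but amounts to the same algebra.
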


\begin{proof}
We have 
\[
\begin{split}
|\braket{\theta,\varphi}{v_\lambda}|>|\braket{\theta,\varphi}{w_\lambda}| &\bimplies \left| \cos\frac{\theta}{2}\cos\frac{\lambda}{2}+\sin\frac{\theta}{2}\sin\frac{\lambda}{2}e^{-i\varphi} \right| >\left| \cos\frac{\theta}{2}\sin\frac{\lambda}{2}+\sin\frac{\theta}{2}\cos\frac{\lambda}{2}e^{-i\varphi} \right|\\
&\bimplies \left|1+\tan\frac{\lambda}{2}\tan\frac{\theta}{2}e^{-i\varphi}\right|>\left|\tan\frac{\lambda}{2}+\tan\frac{\theta}{2}e^{-i\varphi}\right|,
\end{split}
\]
where, for the last step, we divide both sides by $\cos\frac{\lambda}{2}\cos\frac{\theta}{2}$, which is never $0$ since $\lambda,\theta\in[0,\pi/2)$. Let $x\defeq\tan\frac{\lambda}{2}$ and $y\defeq\tan\frac{\theta}{2}$, then
\[
\begin{split}
|1+xye^{-i\varphi}|>|x+ye^{-i\varphi}| &\Leftrightarrow |1+xy(\cos\varphi-i\sin\varphi)|>|x+y(\cos\varphi-i\sin\varphi)|\\
&\Leftrightarrow 1+2xy\cos\varphi+x^2y^2>x^2+2xy\cos\varphi+y^2\\
&\Leftrightarrow 1+x^2y^2-x^2-y^2>0\Leftrightarrow (1-x^2)(1-y^2)>0
\end{split}
\]
and this is always verified since $x,y\in[0,1)$ by the definition of the domains of $\theta$ and $\lambda$. 
\end{proof}

We use this lemma to generalise Theorem~\ref{thm: GHZn} to arbitrary states in the SLOCC class of the tripartite GHZ state.

\begin{restatable}{theorem}{thmEquatorialBalanced}\label{thm: equatorial balanced}
A state in the SLOCC class of GHZ that displays strong non-locality must be balanced. Moreover, any such strongly non-local behaviour can be witnessed using only equatorial measurements. 
\end{restatable}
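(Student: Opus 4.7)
The plan is to adapt the strategy of Theorem~\ref{thm: GHZn} using the LU-equivalent parametrisation \eqref{equ: GHZ-SLOCC}, with Lemma~\ref{lem: scalar product} providing the key amplitude inequality. I would use exactly the same global assignment $g = h \sqcup h \sqcup h$ as in Theorem~\ref{thm: GHZn}, with $h(\theta,\varphi) = +1$ iff $\theta \in [0, \pi/2]$. This $g$ satisfies the relaxed negation condition of Section~\ref{ssec:quantum}, so consistency reduces to showing non-vanishing of the all-$+1$ amplitude
\[
\braketTPv{\GHZSLOCCt} = \sqrt{K}\bigl(\cos\delta\,V + \sin\delta\, e^{i\Phi}\,W\bigr),
\]
where $V \defeq \prod_{i=1}^3 \braket{\theta_i,\varphi_i}{v_{\lambda_i}}$ and $W \defeq \prod_{i=1}^3 \braket{\theta_i,\varphi_i}{w_{\lambda_i}}$, for contexts with $\theta_i \in [0, \pi/2]$ for every $i$.

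The analysis centres on comparing $|V|$ and $|W|$. Since $\lambda_i \in [0, \pi/2)$, the only zero of $\braket{\theta,\varphi}{v_{\lambda_i}}$ occurs at $\theta = \pi - \lambda_i > \pi/2$, which lies outside the relevant range, so $V \neq 0$ throughout. Lemma~\ref{lem: scalar product} gives $|\braket{\theta_i,\varphi_i}{v_{\lambda_i}}| > |\braket{\theta_i,\varphi_i}{w_{\lambda_i}}|$ strictly whenever $\theta_i < \pi/2$, and a direct computation shows that these moduli are equal when $\theta_i = \pi/2$ (since $\ket{v_\lambda}$ and $\ket{w_\lambda}$ are symmetric about the equator). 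Taking products, I would conclude $|V| \geq |W|$ throughout, with strict inequality whenever at least one measurement is non-equatorial.

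The triangle inequality then yields $|\braketTPv{\GHZSLOCCt}| \geq \sqrt{K}\bigl(\cos\delta\,|V| - \sin\delta\,|W|\bigr)$, and two cases complete the proof. If $\delta < \pi/4$ (unbalanced), then $\cos\delta > \sin\delta$, which combined with $|V| \geq |W|$ and $V \neq 0$ makes this bound strictly positive at every context (the degenerate case $W = 0$ reduces the amplitude to $\sqrt{K}\cos\delta\,V \neq 0$). Hence $g$ is consistent with the empirical model at every context, ruling out strong non-locality and forcing balance as a necessary condition. If $\delta = \pi/4$ (balanced), the bound simplifies to $(\sqrt{K}/\sqrt{2})(|V| - |W|)$, strictly positive precisely when some $\theta_i < \pi/2$, so $g$ remains consistent at every context containing a non-equatorial measurement, and only fully-equatorial contexts can contribute to strong non-locality. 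The most delicate part is the uniform handling of the edge cases where $W$ or individual factors vanish, but these collapse cleanly given the parameter ranges $\lambda_i \in [0, \pi/2)$ and $\theta_i \in [0, \pi/2]$.
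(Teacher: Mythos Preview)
Your proposal is correct and follows essentially the same approach as the paper: the same global assignment $g = h\sqcup h\sqcup h$, the same reduction via Lemma~\ref{lem: scalar product} to the modulus inequality $\cos\delta\,|V| \geq \sin\delta\,|W|$, and the same conclusion that equality forces $\delta = \pi/4$ and all $\theta_i = \pi/2$. If anything, you are slightly more explicit than the paper in verifying $V\neq 0$ and in noting that $|\braket{\pi/2,\varphi}{v_\lambda}| = |\braket{\pi/2,\varphi}{w_\lambda}|$ (Lemma~\ref{lem: scalar product} only covers $\theta<\pi/2$), both of which the paper leaves implicit.
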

\begin{proof}%[Proof of Theorem \ref{thm: equatorial balanced}]
The proof of this theorem can be derived by taking advantage of the special properties of balanced states and combining them with the argument used for Theorem \ref{thm: GHZn}. 

As before, we compute the amplitude $\braketTPv{\GHZSLOCCt}$:
\begin{equation*}
\braketTPv{\GHZSLOCCt}=\sqrt{K}\left(\cos\delta\prod_{i=1}^3\braketTPv{v_{\lambda_i}}+\sin\delta e^{i\Phi}\prod_{i=1}^3\braketTPv{w_{\lambda_i}}\right)
\end{equation*}
Take $\fdec{h}{\LM}{O}$ as defined in the proof of Theorem~\ref{thm: GHZn} and let $g\defeq h\sqcup h\sqcup h$. We claim that $g$ is consistent with the empirical probabilities at all contexts that include at least a non-equatorial measurement. 

Let $\TPv$ be a context whose measurements are all mapped to $+1$ by $g$. In particular, $\theta_i\leq\frac{\pi}{2}$ for $i=1,2,3$. If $\braketTPv{\GHZSLOCCt}=0$, then
\[
\cos\delta\prod_{i=1}^3\braketTPv{v_{\lambda_i}}=-\sin\delta e^{i\Phi}\prod_{i=1}^3\braketTPv{w_{\lambda_i}},
\]
and taking the complex modulus of both sides,
\[
\cos\delta\prod_{i=1}^3|\braketTPv{v_{\lambda_i}}|=\sin\delta\prod_{i=1}^3|\braketTPv{w_{\lambda_i}}|
\]
Since $\delta\in(0,\pi/4]$ we have $\cos\delta\ge\sin\delta$, with equality iff $\delta=\frac{\pi}{4}$. By Lemma~\ref{lem: scalar product}, we conclude that this equation can only be satisfied if $\delta=\frac{\pi}{4}$ (i.e. the state is balanced) and $\theta_i=\frac{\pi}{2}$ for $i=1,2,3$ (i.e. all the measurements are equatorial).
\end{proof}

\subsection{Further restrictions}

The theorem above allows us to reduce the scope of our search for strongly non-local behaviour in the SLOCC class of GHZ to: (i) balanced states, i.e. those of the form
\[
\Bal\defeq \sqrt{\frac{K}{2}}(\ket{v_{\lambda_1}}\ket{v_{\lambda_2}}\ket{v_{\lambda_3}}+e^{i\Phi}\ket{w_{\lambda_1}}\ket{w_{\lambda_2}}\ket{w_{\lambda_3}}),
\]
determined by a tuple $\vlambda=\tuple{\lambda_1,\lambda_2,\lambda_3}\in \left[0,\frac{\pi}{2}\right)^3$ and a phase $\Phi$,
where $\ket{v_\lambda}$ and $\ket{w_\lambda}$ are given as in \eqref{equ: lambda1};
%\[
%\ket{v_\lambda}\defeq \cos\frac{\lambda}{2}\ket{0}+\sin\frac{\lambda}{2}\ket{1}
%\quad\text{and}\quad
%\ket{w_\lambda}\defeq \sin\frac{\lambda}{2}\ket{0}+\cos\frac{\lambda}{2}\ket{1},
%\]
%and $K$ is a normalisation constant;
(ii) local equatorial measurements in the sense defined above, i.e. those with $+1$ eigenstate
\[
\ket{\varphi}\defeq\left|\frac{\pi}{2},\varphi\right\rangle=\frac{1}{\sqrt{2}}(\ket{0}+e^{i\varphi}\ket{1})
\]
for $\varphi\in[0,2\pi)$.
Given this premise, we are interested in understanding when the amplitude function $\braket{\vphi}{\Balt}$ is $0$. We have:
\begingroup
\allowdisplaybreaks
\begin{align}
\braket{\vphi}{\Balt}=0 &\bimplies \prod_{i=1}^3\braket{\varphi_i}{v_{\lambda_i}}+e^{i\Phi}\prod_{i=1}^3\braket{\varphi_i}{w_{\lambda_i}}=0 \nonumber \\
&\bimplies \prod_{i=1}^3\braket{\varphi_i}{w_{\lambda_i}} = -e^{-i\Phi}\prod_{i=1}^3\braket{\varphi_i}{v_{\lambda_i}}\nonumber \\
&\bimplies \prod_{i=1}^3\braket{\varphi_i}{w_{\lambda_i}} = -e^{-i\Phi}\prod_{i=1}^3e^{-i\varphi_i}\overline{\braket{\varphi_i}{w_{\lambda_i}}}\label{equ: conjugate}\\
&\bimplies \prod_{i=1}^3e^{i\varphi_i}\braket{\varphi_i}{w_{\lambda_i}}\overline{\braket{\varphi_i}{w_{\lambda_i}}}^{^{-1}}=-e^{-i\Phi}\nonumber\\
&\bimplies\prod_{i=1}^3e^{i\varphi_i}\left(\frac{\braket{\varphi_i}{w_{\lambda_i}}}{|\braket{\varphi_i}{w_{\lambda_i}}|}\right)^2=-e^{-i\Phi}\nonumber\\
&\bimplies \sum_{i=1}^3\left(\varphi_i+2\mbox{Arg}\braket{\varphi_i}{w_{\lambda_i}}\right)=\pi-\Phi\mod 2\pi\nonumber
\end{align}
\endgroup
where to get \eqref{equ: conjugate} we use 
\[
\braket{\varphi}{v_{\lambda}}
=
\frac{1}{\sqrt{2}}\left(\cos\frac{\lambda}{2}+\sin\frac{\lambda}{2}e^{-i\varphi}\right)
=
\frac{e^{-i\varphi}}{\sqrt{2}}\left(\cos\frac{\lambda}{2}e^{i\varphi}+\sin\frac{\lambda}{2}\right)
=
e^{-i\varphi}\overline{\braket{\varphi}{w_{\lambda}}}.
\]
and for the last step we take the argument of two complex numbers of norm $1$. Defining
\[
\beta(\lambda, \varphi)\defeq\varphi+ 2\mbox{Arg}\braket{\varphi}{w_\lambda}=\varphi-2\arctan\left(\frac{\sin\frac{\lambda}{2}\sin\varphi}{\cos\frac{\lambda}{2}+\sin\frac{\lambda}{2}\cos\varphi}\right),
\]
we can rewrite the condition above as
\begin{equation}\label{equ: vanishing}
\braket{\vphi}{\Balt}=0 \quad\bimplies\quad \sum_{i=1}^3\beta(\lambda_i,\varphi_i)=\pi-\Phi \mod 2\pi
\end{equation}

\begin{restatable}{proposition}{propLambda}\label{prop: lambda}
If $\lambda_1+\lambda_2+\lambda_3>\frac{\pi}{2}$, the state $\Balzero$ does not admit strongly non-local behaviour. 
\end{restatable}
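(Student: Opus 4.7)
The plan is to construct an explicit global assignment $g$ and verify that it is consistent with every event deemed possible by the empirical model of $\Balzero$. By Theorem~\ref{thm: equatorial balanced}, the only contexts that can possibly witness strong non-locality are the fully equatorial ones, so I would extend the assignment used in the proof of Theorem~\ref{thm: equatorial balanced} by prescribing, at every site $i$, $g_i(\pi/2, \varphi) = +1$ iff $\varphi \in [-\pi/2, \pi/2) \pmod{2\pi}$, while keeping $g_i(\theta,\varphi) = +1$ for $\theta < \pi/2$ and $g_i(\theta,\varphi) = -1$ for $\theta > \pi/2$. A direct check shows that $g$ is negation-antisymmetric in the sense of~\eqref{eq:g-negpreserve}, so it suffices to verify that for every context $\TPv$ in which each measurement is mapped to $+1$ the amplitude $\braketTPv{\Balzerot}$ is nonzero. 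Contexts containing at least one non-equatorial measurement are handled by the argument of Theorem~\ref{thm: equatorial balanced} via Lemma~\ref{lem: scalar product}, so the task reduces to the fully equatorial case. There, specialising~\eqref{equ: vanishing} to $\Phi = 0$, the amplitude vanishes iff
\[
\sum_{i=1}^3 \beta(\lambda_i, \varphi_i) \equiv \pi \pmod{2\pi}.
\]

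The key analytic input is a bound on $\beta(\lambda, \cdot)$ over the arc $\varphi \in [-\pi/2, \pi/2]$, read as a continuous real lift rather than a value modulo $2\pi$. Three elementary facts would do the work: $\beta(\lambda, 0) = 0$, by direct substitution; $\beta(\lambda, -\varphi) = -\beta(\lambda, \varphi)$, by oddness of $\sin$ and $\arctan$; and $\partial_\varphi \beta < 0$ on $[-\pi/2, \pi/2]$, obtained by differentiating $\mbox{Arg}\,\braket{\varphi}{w_\lambda}$ and observing that the result is a positive multiple of $-\cos\lambda$, strictly negative for $\lambda \in [0, \pi/2)$. Direct evaluation at $\varphi = \pi/2$ gives $|\beta(\lambda, \pi/2)| = \pi/2 - \lambda$, and monotonicity together with oddness then yield
\[
|\beta(\lambda, \varphi)| \le \frac{\pi}{2} - \lambda \qquad \text{for every } \varphi \in [-\pi/2, \pi/2].
\]

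Summing across the three sites, for any fully equatorial context with $\varphi_i \in [-\pi/2, \pi/2)$,
\[
\left|\sum_{i=1}^3 \beta(\lambda_i, \varphi_i)\right| \le \sum_{i=1}^3 \left(\frac{\pi}{2} - \lambda_i\right) = \frac{3\pi}{2} - \sum_{i=1}^3 \lambda_i \;<\; \pi,
\]
the final strict inequality being precisely the hypothesis $\lambda_1 + \lambda_2 + \lambda_3 > \pi/2$. Hence the sum cannot equal any integer-shifted representative $\pi + 2\pi k$ of $\pi \pmod{2\pi}$, so the amplitude is nonzero and $g$ is the desired consistent global assignment.

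The main obstacle is precisely this last bookkeeping step: because the vanishing condition lives modulo $2\pi$, one has to work with a continuous real-valued lift of $\beta$ and convert the analysis into the magnitude bound $|\sum \beta| < \pi$, in order to rule out every integer-shifted copy of $\pi$ simultaneously without getting tangled in the branch cuts of the $\arctan$ appearing in $\beta$'s defining formula. Once this is set up, the three-site sum mechanically delivers the threshold $\sum_i \lambda_i > \pi/2$ as exactly the condition that forces $|\sum\beta| < \pi$.
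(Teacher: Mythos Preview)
Your proof is correct and follows essentially the same route as the paper's: restrict to equatorial contexts via Theorem~\ref{thm: equatorial balanced}, define the global assignment on the equator by a half-arc (the paper uses $(-\pi/2,\pi/2]$, you use $[-\pi/2,\pi/2)$), establish that $\beta(\lambda,\cdot)$ is odd and monotone with $|\beta(\lambda,\pm\pi/2)|=\tfrac{\pi}{2}-\lambda$, and conclude $\bigl|\sum_i\beta(\lambda_i,\varphi_i)\bigr|<\pi$ under the hypothesis $\sum_i\lambda_i>\tfrac{\pi}{2}$.

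The only discrepancy is the sign you claim for $\partial_\varphi\beta$: the paper computes $\partial_\varphi\beta=\frac{\cos\lambda}{1+\cos\varphi\sin\lambda}>0$, whereas you assert the derivative is negative. This does not affect the argument, since only monotonicity (of either sign) together with oddness and the endpoint value is needed to obtain $|\beta(\lambda,\varphi)|\le\tfrac{\pi}{2}-\lambda$ on the half-arc. Your handling of the $\bmod\ 2\pi$ bookkeeping via a continuous lift and the bound $|\sum\beta|<\pi$ is exactly what the paper does as well.
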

\begin{proof}%[Proof of Proposition \ref{prop: lambda}]
We start by showing that the map $\beta(\lambda,\varphi)$, seen as a function of $\varphi$, is strictly increasing for all $\lambda\in\left[0,\frac{\pi}{2}\right)$. To see this, it is sufficient to compute the derivative:
\[
\Forall{\lambda\in\left[0,\frac{\pi}{2}\right),\varphi\in[0,2\pi)}
\;\;
\frac{\partial}{\partial \varphi}\beta(\lambda,\varphi)=\frac{\cos\lambda}{1+\cos\varphi\sin\lambda} \Mdot
\]
This is strictly positive since $\cos\lambda>0$ and $\cos\varphi\sin\lambda>-1$ since $0\leq \sin\lambda <1$.

Now, define a function $\fdec{h}{[0,2\pi)}{O}$ by
\[
h(\varphi)\defeq \begin{cases}
+1 & \text{ if } \varphi\in\left(-\frac{\pi}{2},\frac{\pi}{2}\right]\\
-1 & \text{ if } \varphi\in\left(\frac{\pi}{2},\frac{3\pi}{2}\right]
\end{cases}
\]
and let $g\defeq h\sqcup h\sqcup h$. Take a context $\vphi$ whose measurements are assigned $+1$ by $g$, i.e. $\varphi_i\in\left(-\frac{\pi}{2},\frac{\pi}{2}\right]$. Using the fact that $\beta(\lambda,-)$ is increasing, we have
\[
\left|\sum_{i=1}^3\beta(\lambda_i,\varphi_i)\right|\leq \sum_{i=1}^3|\beta(\lambda_i,\varphi_i)|\leq \sum_{i=1}^3\beta\left(\lambda_i,\frac{\pi}{2}\right)=\sum_{i=1}^3\left(\frac{\pi}{2}-\lambda_i\right)=\frac{3\pi}{2}-\sum_{i=1}^3\lambda_i<\frac{3\pi}{2}-\frac{\pi}{2}=\pi.
\]
Consequently, $\sum_{i=1}^3\beta(\lambda_i,\varphi_i)\neq \pi \mod 2\pi$, hence by \eqref{equ: vanishing},
$\braket{\vphi}{\Balzerot}\neq0$ as required.
\end{proof}

\section{A family of strongly non-local three-qubit models}\label{sec:GHZ-family}
\begin{restatable}{theorem}{thmGHZfamily}\label{thm: GHZ-family}
Let $m\in\mathbb{N}_{>0}$ and $N\defeq 2m$ an even number.
Consider the tripartite measurement scenario with $X_1=X_2=\enset{0,\ldots,N-1}$ and $X_3=\enset{0,\frac{N}{2}}$. The empirical model determined by the state $\Balarg{\tuple{0,0,\lambda_N}}{0}$, where $\lambda_N\defeq \frac{\pi}{2}-\frac{\pi}{N}$, with the measurement label $i$ at each site interpreted as the local equatorial measurement $\cos \frac{i\pi}{N}\sigma_X+\sin \frac{i\pi}{N}\sigma_Y$ (i.e. the measurement with $+1$ eigenstate $\bl{\frac{\pi}{2}}{i\frac{\pi}{N}}$), is strongly non-local.
\end{restatable}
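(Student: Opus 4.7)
The plan is to prove strong non-locality by ruling out every possible global assignment rather than exhibiting a consistent one. Since $\lambda_1=\lambda_2=0$ and $\Phi=0$, the state $\Balarg{\tuple{0,0,\lambda_N}}{0}$ simplifies to
\[
\ket{\psi}\;=\;\tfrac{1}{\sqrt{2}}\bigl(\ket{0}\ket{0}\ket{v_{\lambda_N}}+\ket{1}\ket{1}\ket{w_{\lambda_N}}\bigr),
\]
with $K=1$ because the two summands are orthogonal. Computing $\braket{\vphi}{\psi}$ for a purely equatorial context and using the identities $1+e^{-i\alpha}=2e^{-i\alpha/2}\cos(\alpha/2)$ and $e^{-i\alpha}+e^{-i\beta}=2e^{-i(\alpha+\beta)/2}\cos((\alpha-\beta)/2)$, the amplitude factors as $\tfrac{1}{2}e^{-i(\varphi_1+\varphi_2+\varphi_3)/2}\,F(\vphi)$, where
\[
F(\vphi)\;=\;\cos\tfrac{\lambda_N}{2}\cos\tfrac{\varphi_1+\varphi_2+\varphi_3}{2}+\sin\tfrac{\lambda_N}{2}\cos\tfrac{\varphi_1+\varphi_2-\varphi_3}{2}.
\]
By the remark in Section~\ref{ssec:quantum}, the amplitude for an outcome vector $\vo$ on context $\vphi$ equals, up to phase, $\tfrac{1}{2}F(\vphi')$ evaluated at the effective context $\vphi'$ obtained from $\vphi$ by the substitutions $\varphi_i\mapsto\varphi_i+\pi$ for every $i$ with $o_i=-1$.

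Encode labels and outcomes jointly by $k_i\defeq j_i+s_i(j_i)N\in\ZZ/2N$, where $s_i(j)\in\{0,1\}$ records whether $g_i$ assigns $+1$ or $-1$ to the measurement labelled $j$; then $\varphi_i'=k_i\pi/N$. Substituting $\lambda_N=\pi/2-\pi/N$ into $F(\vphi')=0$ and solving separately at each of the four possible effective values $\varphi_3'\in\{0,\pi,\pi/2,3\pi/2\}$, I obtain: the amplitude vanishes iff $k_1+k_2\equiv N\pmod{2N}$ for $(j_3,o_3)=(0,+1)$, iff $\equiv 0$ for $(0,-1)$, iff $\equiv N-1$ for $(N/2,+1)$, and iff $\equiv N+1$ for $(N/2,-1)$. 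Since $k_1+k_2\equiv j_1+j_2+\bigl(s_1(j_1)+s_2(j_2)\bigr)N\pmod{2N}$, consistency of the assignment over all $(j_1,j_2)\in\{0,\ldots,N-1\}^2$ becomes, for each of the four cases $(g_3(0),g_3(N/2))\in\{+1,-1\}^2$, a system of $\{0,1\}$-valued equations on $a_j\defeq s_1(j)$ and $b_j\defeq s_2(j)$, indexed by the level sets $\{(j_1,j_2)\mid j_1+j_2=r\}$ for $r\in\{0,1,N-1,N,N+1\}$.

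A short case analysis rules each of the four cases out. For $g_3(0)=+1$, both constraint families coming from $j_1+j_2=N$ and $j_1+j_2=N-1$ express each $b_k$ as an affine function of $a$: equating the two expressions forces $a_l=a_{l-1}$ for every $l$, so $a_\bullet$ is constant, say $\equiv c$. Then $b_0$ must equal $c$ (from one of the remaining singleton slices) and $1-c$ (from the other), a contradiction. For $g_3(0)=-1$, the same manoeuvre instead forces $a_{l+1}=1-a_l$, so $a_\bullet$ alternates; because $N$ is even this yields $a_{N-1}=1-a_0$, and the remaining slice constraint combined with $b_k=a_{N-k}$ (from the $j_1+j_2=N$ slice) again produces $c=1-c$. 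Hence no choice of $(g_3(0),g_3(N/2),s_1,s_2)$ is consistent with the model, proving strong non-locality.

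The main obstacle is the computation in Step~2: $F(\vphi')=0$ must be solved cleanly at all four values of $\varphi_3'$, with careful handling of the $\tan$-periodicity when $\varphi_3'=\pm\pi/2$. The specific choice $\lambda_N=\pi/2-\pi/N$ is precisely what pins the $\varphi_3'=\pi/2$ and $\varphi_3'=3\pi/2$ conditions to $k_1+k_2\equiv N-1$ and $k_1+k_2\equiv N+1$ respectively, placing the two forbidden sums immediately adjacent to the coset $\{0,N\}\subset\ZZ/2N$; this adjacency, combined with the evenness of $N$, is what drives the $c=1-c$ contradiction uniformly across the four cases. The combinatorics, by contrast, is mechanical once the template of constraints is written down.
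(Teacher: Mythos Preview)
Your proposal is correct and follows essentially the same route as the paper: reduce the zero-amplitude condition to a congruence on $k_1+k_2$ in $\ZZ/2N$, translate consistency into a $\ZZ_2$-linear system in the outcome bits $a_j,b_j$, and show that system has no solution. Your direct computation of $F(\vphi)$ and its zeros reproduces exactly the constraints the paper obtains via its $\beta$ function; in the paper's notation your four forbidden values $k_1+k_2\equiv N,\,0,\,N-1,\,N+1$ correspond precisely to the system
\[
a_0\oplus b_0\oplus c_0=0,\quad a_i\oplus b_{N-i}\oplus c_0=1\ (1\le i\le N-1),
\]
together with the $c_m$-dependent block.

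The one substantive difference is the final inconsistency argument. You do a four-way case split on $(g_3(0),g_3(N/2))$ and, within each case, chain the two families of equations to force $a_\bullet$ constant or alternating, then extract a $c=1-c$ clash. This works, but your writeup only details the two sub-cases with $g_3(N/2)=+1$ (the $j_1+j_2=N-1$ slice); the $g_3(N/2)=-1$ sub-cases (slices $j_1+j_2\in\{1,N+1\}$) are analogous but not covered by what you wrote, and your phrase ``remaining singleton slices'' is imprecise (the slice $r=0$ is a singleton, but the contradicting constraint on $b_0$ comes from the non-singleton slice $r=N-1$ at $j_1=N-1$). The paper avoids all four cases at once with a parity trick: summing the $N$ equations from the $c_0$-block gives $\bigoplus a_i\oplus\bigoplus b_j=1$ (since $N$ is even, the $c_0$'s cancel), while summing either $c_m$-block gives $\bigoplus a_i\oplus\bigoplus b_j=0$. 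This is shorter and uniformly handles both values of $c_m$; you may want to substitute it for your case analysis.
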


\begin{proof}%[Proof of Theorem \ref{thm: GHZ-family}]
This proof rests on deriving, using the algebraic structure of $\ZZ_{2N}$,
a (conditional) system of linear equations over $\mathbb{Z}_{2}$ that must be satisfied by any global assignment consistent with the possible events of the empirical model, yet does not admit any solution.
This seems to be closely related to the general concept of all-vs-nothing (AvN) arguments introduced in \cite{AbramskyEtAl:ContextualityCohomologyAndParadox},
but does not quite fit this setting.
The reason is that the system of linear equations that a global assignment $g$ must satisfy
depends on the value that $g$ assigns to a particular measurement.
In that sense, this could be seen as a conditional version of an AvN argument.

Consider a context $\tuple{i,j,k}\in X_1\times X_2\times X_3$, with $i,j\in\enset{0,\ldots,N-1}$, $k \in \enset{0,m}$,
and a triple of outcomes $\tuple{a_i,b_j,c_k}\in\mathbb{Z}_2^3$ for the measurements in the context.\footnote{For this proof, it is convenient to relabel $+1, -1, \times$ as $0,1,\oplus$, where $\oplus$ denotes addition modulo $2$.} 
From equation \eqref{equ: vanishing}, we know that measuring $\tuple{i,j,k}$ and obtaining outcomes $\tuple{a_i,b_j,c_k}$ has probability zero if and only if 
\begin{equation}\label{equ: =0}
\beta\left(0,i\frac{\pi}{N}+a_i\pi\right)+\beta\left(0,j\frac{\pi}{N}+b_j\pi\right)+\beta\left(\frac{\pi}{2}-\frac{\pi}{N},k\frac{\pi}{N}+c_k\pi\right)=\pi \mod 2\pi
\end{equation}
With simple computations, we can show that $\beta(0,\varphi)=\varphi$ for all $\varphi\in[0,2\pi)$, and that
\begin{equation}\label{equ: c}
\beta\left(\frac{\pi}{2}-\frac{\pi}{N},c_0\pi\right)=c_0\pi\quad\text{and}\quad\beta\left(\frac{\pi}{2}-\frac{\pi}{N},\frac{\pi}{2}+c_m\pi\right)=(-1)^{c_m}\frac{\pi}{N}.
\end{equation}

An arbitrary global assignment is defined by choosing outcomes for all the measurements in $X_1 \sqcup X_2 \sqcup X_3$:
\[
a_0,\ldots, a_{N-1}, b_0,\ldots, b_{N-1}, c_0, c_m\in\mathbb{Z}_2.
\]
By \eqref{equ: =0} and \eqref{equ: c}, such an assignment is consistent with the probabilities of the empirical model at every context if and only if
\begin{equation*}
\begin{cases}
i\frac{\pi}{N}+a_i\pi+j\frac{\pi}{N}+b_j\pi+c_0\pi\neq\pi  &\mod2\pi\quad \forall i,j\in\enset{0,\ldots,N-1}\\
i\frac{\pi}{N}+a_i\pi+j\frac{\pi}{N}+b_j\pi+(-1)^{c_m}\frac{\pi}{N}\neq\pi  & \mod 2\pi\quad \forall i,j\in\enset{0,\ldots,N-1}
\end{cases}
\end{equation*}
We will proceed to show that this system admits no solution, which implies strong non-locality.
By identifying the group $\setdef{k\frac{\pi}{N}}{k\in\mathbb{Z}_{2N}}$ with $\mathbb{Z}_{2N}$, we can equivalently rewrite
\begingroup
\allowdisplaybreaks
\begin{align*}
&\begin{cases}
i+a_iN+j+b_jN+c_0N\neq N  & \mod 2N\quad \forall i,j\\
i+a_iN+j+b_jN+(-1)^{c_m}\neq N & \mod 2N\quad \forall i,j
\end{cases}\\
\bimplies&\\&
\begin{cases}
i+j+N(a_i\oplus b_j\oplus c_0)\neq N & \mod 2N\quad \forall i,j\\
i+j+(-1)^{c_m}+N(a_i\oplus b_j)\neq N & \mod 2N\quad \forall i,j
\end{cases}\\
\bimplies&\\&
\begin{cases}
a_i\oplus b_j\oplus c_0=0 & \forall i,j \text{ s.t. } i+j=0\\
a_i\oplus b_j\oplus c_0=1 & \forall  i,j \text{ s.t. } i+j=N\\
& \\
a_i\oplus b_j=0 & \forall  i,j \text{ s.t. } i+j+(-1)^{c_m}=0\\
a_i\oplus b_j=1 & \forall i,j \text{ s.t. } i+j+(-1)^{c_m}=N.
\end{cases}\\
\bimplies&\\&
\begin{cases}
a_0\oplus b_0\oplus c_0=0 & \\
a_i\oplus b_{N-i}\oplus c_0=1 & \forall i \text{ s.t. }1\leq i\leq N-1\\
& \\
a_i\oplus b_{N-i-1}=1 & \forall i \text{ s.t. } 0\leq i\leq N-1 \quad\quad\text{ if } c_m=0\\
& \\
a_0\oplus b_1=0 & \\
a_1\oplus b_0=0 &  \hphantom{\forall i \text{ s.t. } 0\leq i\leq N-1} \quad\quad\text{ if } c_m=1\\
a_i\oplus b_{N+1-i}=1 & \forall i \text{ s.t. }2\leq i\leq N-1
\end{cases}
\end{align*}%
\endgroup
Since $N=2m$ is even, if we sum all the $N$ equations from the first two lines we obtain
\[
\bigoplus_{i=0}^{N-1}a_i\oplus\bigoplus_{j=0}^{N-1}b_j=1.
\]
On the other hand, if we sum any of the other two groups of $N$ equations we get
\[
\bigoplus_{i=0}^{N-1}a_i\oplus \bigoplus_{j=0}^{N-1}b_j=0,
\]
showing that the system is unsatisfiable regardless of whether $c_m=0$ or $c_m=1$. 
\end{proof}

This new family of strongly non-local three-qubit systems is tightly connected to a construction on two-qubit states due to Barrett, Kent, and Pironio \cite{BarrettKentPironio2006:MaximallyNonlocalAndMonogamousQuantumCorrelations}. In particular, our empirical models restricted to the first two parties coincide, up to a rotation of the equatorial measurements, to those used in \cite{BarrettKentPironio2006:MaximallyNonlocalAndMonogamousQuantumCorrelations}. The local fraction of these bipartite empirical models tends to zero as the number of measurements increases, but obviously none of them are strongly non-local. Despite the lack of strong non-locality in the bipartite systems constructed in \cite{BarrettKentPironio2006:MaximallyNonlocalAndMonogamousQuantumCorrelations}, we show that it is possible to witness strongly non-local behaviour with a finite amount of measurements by adding a third qubit with some entanglement, and only two local measurements -- Pauli $X$ and $Y$ -- available on it.
An interesting aspect is that there is a trade-off between the number of measuring settings available on the first two qubits and the amount of entanglement between the third qubit and the system comprised of the other two.

We illustrate this by computing the bipartite von Neumann entanglement entropy between the first two qubits and the third, i.e. the von Neumann entropy of the reduced state of $\Balarg{\tuple{0,0,\lambda}}{0}$ corresponding to the third qubit, as a function of $\lambda$.
Let $\rho_{ABC}$ denote the density matrix of $\Balarg{\tuple{0,0,\lambda}}{0}$. The reduced density matrix corresponding to the third qubit is
\[
\rho_C(\lambda)=\Tr_{AB}[\rho_{ABC}]=\braket{00|_{_{AB}}\rho_{ABC}}{00}_{_{AB}}+\braket{11|_{_{AB}}\rho_{ABC}}{11}_{_{AB}}=\frac{1}{2}
\left(
\begin{matrix}
1 & 2\cos\frac{\lambda}{2}\sin\frac{\lambda}{2}\\
2\cos\frac{\lambda}{2}\sin\frac{\lambda}{2} & 1
\end{matrix}
\right).
\]
The eigenvalues of $\rho_C(\lambda)$ are $\epsilon_{\pm}(\lambda)\defeq \frac{1}{2}(1\pm\sin\lambda)$. Hence, by rewriting $\rho_C(\lambda)$ in its eigenbasis, we can easily compute the von Neumann entropy $S_C$ as a function of $\lambda$:
\[
S_C(\lambda)\defeq -\Tr\left[ \rho_C(\lambda)\log_2\rho_C(\lambda)\right]=-\epsilon_+(\lambda)\log_2\epsilon_+(\lambda)  -   \epsilon_-(\lambda)\log_2\epsilon_-(\lambda)
\]
The plot of the function $S_C(\lambda)$ is shown in Figure \ref{fig: trade-off}. 
\begin{figure}[htbp]
\centering
\includegraphics[scale=0.55]{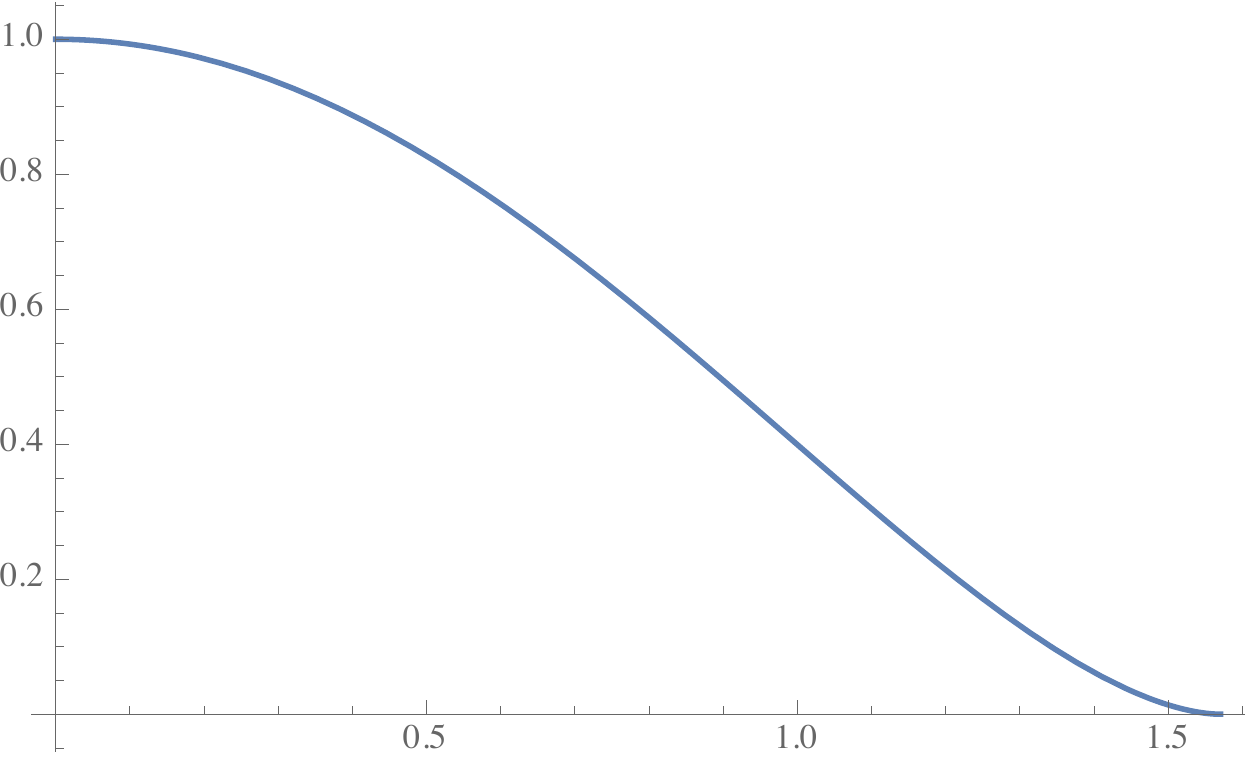}
\caption{Von Neumann entanglement entropy between the third qubit of $\Balarg{\tuple{0,0,\lambda}}{0}$ and the other two as a function of $\lambda$.}\label{fig: trade-off}
\end{figure}
Notice that the entanglement entropy is maximal, i.e. equal to $1$, when $N=2$, in which case $\lambda_2=0$ and so $\Balarg{\tuple{0,0,\lambda_2}}{0}=\GHZ$. This corresponds to the usual GHSZ argument with Pauli measurements $X$,$Y$ for each qubit. On the other hand, $S(\lambda)$ becomes arbitrarily small as $N \to \infty$, when $\lambda_N \to \frac{\pi}{2}$ and  $\Balarg{\tuple{0,0,\lambda_N}}{0}$ approaches the state $\BellState\otimes \ket{+}$, which has no entanglement between the first two qubits and the third.

\section{Outlook}\label{sec:outlook}

Our analysis of strong non-locality for three-qubit systems has been quite extensive. We shall discuss a number of directions for further research.

\begin{enumerate}
\item First, it remains to complete our classification of all instances of three-qubit strong non-locality.
\item
The original GHSZ--Mermin model witnesses the yet stronger algebraic notion of all-versus-nothing (AvN) non-locality, formalised in a general setting in \cite{AbramskyEtAl:ContextualityCohomologyAndParadox}, and indeed provides one of the motivating examples for considering this kind of non-locality.
The family of strongly non-local models introduced in Section~\ref{sec:GHZ-family} does not fit this framework exactly. 
Nevertheless, our proof of strong non-locality does make essential use of the algebraic structure of $\mathbb{Z}_{2N}$ (or the circle group), in what amounts to a conditional version of an AvN argument.
One may wonder whether a similar property will hold for all instances of three-qubit strong non-locality.
\item This family also highlights an inter-relationship between non-locality, entanglement and the number of measurements available, and raises the question of whether this is an instance of a more general relationship. 
\item Finally, while the present results provide necessary conditions for strong non-locality in three-qubit states, the more general question of characterising strong non-locality of $n$-qubit states, where little is known about SLOCC classes, remains open.
\end{enumerate}

\section*{Acknowledgments} This work was carried out in part while some of the authors visited the Simons Institute for the Theory of Computing (supported by the Simons Foundation) at the University of California, Berkeley, as participants of the Logical Structures in Computation programme (AB, RSB, GC, NdS, SM),
and while SM was based at the Institut de Recherche en Informatique Fondamentale, Universit\'e Paris Diderot -- Paris 7.
Support from the following is also gratefully acknowledged: EPSRC EP/N018745/1 (SA, RSB) and EP/N017935/1 (NdS), `Contextuality as a Resource in Quantum Computation'; EPSRC Doctoral Training Partnership and Oxford--Google Deepmind Graduate Scholarship (GC); U.S. AFOSR FA9550-12-1-0136, `Topological \& Game-Semantic Methods for Understanding Cyber Security' (KK); Fondation Sciences Math\'ematiques de Paris, post-doctoral research grant otpFIELD15RPOMT-FSMP1, `Contextual Semantics for Quantum Theory' (SM).

\bibliographystyle{eptcs}
\bibliography{refs_tripartite}

\end{document}